\newtheorem{theorem}{Theorem}[]
\newtheorem{proposition}{Proposition}[]
\newtheorem{definition}{Definition}
\crefname{figure}{fig.}{Fig.}
\crefname{section}{sec.}{Sec.}
\newcommand{\bs}[1]{\boldsymbol{#1}}
\newcommand{\cl}[1]{\mathcal{#1}}
\newcommand{\bb}[1]{\mathbb{#1}}
\newcommand{\lb}{\left(}
\newcommand{\rb}{\right)}
\newcommand{\ls }{\left[}
\newcommand{\rs}{\right]}
\newcommand{\lc}{\left\{}
\newcommand{\rc}{\right\}}
\newcommand{\dif}{\mathrm{d}}
\def\BibTeX{{\rm B\kern-.05em{\sc i\kern-.025em b}\kern-.08em
    T\kern-.1667em\lower.7ex\hbox{E}\kern-.125emX}}
\begin{document}
\title{  Channel Access Strategies for Control-Communication Co-Designed Networks}
\author{ Gourab Ghatak, Geethu Joseph, and Chen Quan
\thanks{G. Ghatak is with the Department of Electrical Engineering, IIT Delhi, Delhi 110016 India. Email: gghatak@ee.iitd.ac.in.\\ 
G. Joseph and C. Quan are with the Faculty of Electrical Engineering, Mathematics, and Computer Science, TU Delft, 2628 CD Delft, Netherlands. Emails: \{g.joseph,c.quan\}@tudelft.nl.}
\vspace{-1.5cm}
}

\maketitle

\begin{abstract}
We develop a framework for communication-control co-design in a wireless networked control system with multiple geographically separated controllers and controlled systems, modeled via a Poisson point process. Each controlled system consists of an actuator, plant, and sensor. Controllers receive state estimates from sensors and design control inputs, which are sent to actuators over a shared wireless channel, causing interference. Our co-design includes control strategies at the controller based on sensor measurements and transmission acknowledgments from the actuators for both rested and restless systems - systems with and without state feedback, respectively. In the restless system, controllability depends on consecutive successful transmissions, while in the rested system, it depends on total successful transmissions. We use both classical and block ALOHA protocols for channel access, optimizing access based on sensor data and acknowledgments. A statistical analysis of control performance is followed by a Thompson sampling-based algorithm to optimize the ALOHA parameter, achieving sub-linear regret. We show how the ALOHA parameter influences control performance and transmission success in both system types.
\end{abstract}

\begin{keywords}
    Poisson point process, ALOHA protocol, Thompson sampling, controllability, shared wireless channel, interference, stochastic geometry
\end{keywords}
\section{Introduction}
The rapid evolution of wireless networks and the rising demand for new applications, such as remote surgery in advanced medical systems, robotics in industrial automation, intelligent buildings, and autonomous driving~\cite{ding2018distributed,lu2023control}, demand real-time control systems over wireless channels. These networked control systems comprise multiple controllers, sensors, actuators, and plants connected through a shared communication medium. Using shared network resources introduces new design challenges that affect the control performance in a networked control system, including unreliable communication links~\cite{hu2020prediction}, transmission delays~\cite{ning2020distributed}, time synchronization issues~\cite{baumann2019control}, and network access constraints and conflicts~\cite{gatsis2015opportunistic}. Furthermore, due to the ad-hoc deployment of wireless access points, especially in the unlicensed bands, the interfering signals from co-channel transmissions of different controllers may severely degrade the performance of a wireless control system. These issues underscore the need for a joint communication-control design of a large-scale wireless networked control system, which we address in this paper.

Prior research has mainly focused on networked control systems with a single controller and plant with one or more actuators and sensors communicating over wireless links. Current literature discusses two principal approaches to designing these systems: independent control-communication design, which treats control and communication components separately~\cite{kawadia2005cautionary,mahmoud2014control,hespanha2007survey}, and control-communication co-design, which integrates both for improved performance~\cite{zhang2014network,qiao2022communication,hu2007stability,cao2012online,girgis2021predictive,wang2020aoi}. This paper focuses on the co-design approach, which is better suited for real-time control applications in wireless networks~\cite{qiao2022communication}. 
Existing studies have explored several aspects of co-design problems, such as stabilizing control systems and enhancing network security with communication imperfections~\cite{zhang2014network} and maintaining system reachability and observability under limited communication resources~\cite{zhang2006communication}. Other works look at handling packet loss in cloud-controlled systems~\cite{qiao2022communication}, missing state information from sensors~\cite{hu2007stability}, and packet loss in sensor-to-controller and controller-to-actuator channels. Another research direction studies the age-of-information metric to assess the reliability and freshness of information in wireless control systems, with\cite{girgis2021predictive} exploring scheduling and power allocation, and \cite{wang2020aoi} optimizing control costs and energy consumption.
In short, co-design methods typically involve two optimization types. The first type optimizes control objectives, such as control costs or control performance metrics like stabilization and tracking error, under communication constraints, such as time delay, bandwidth, packet loss probability, number of communication channels, and sampling period~\cite{chang2019packet,eisen2019control,gatsis2016state,scheuvens2019wireless}, and the second designs communication protocols to meet control performance goals~\cite{park2011wireless,eisen2019control,chang2019optimizing}. 
 While several efforts have been made for joint communication-control designs, large-scale wireless control systems with multiple interfering controllers have not been well-studied in the literature. This paper addresses this issue by considering interfering signals from the randomized locations of control and actuator pairs. 

Our model addresses a wireless networked control system involving multiple controllers and controlled systems located in different geographic areas. The controllers regularly obtain state estimates from sensors, formulate control inputs, and send these inputs to actuators via a shared channel, which results in interference. The co-design approach integrates a control strategy that leverages sensor data and actuator transmission acknowledgments, using a block ALOHA protocol to manage channel access and mitigate interference. It requires an accurate characterization of the interference experienced by the controllers, integrating control and communication theories. Stochastic geometry offers useful tools and techniques to study the statistical impact of different network geometries on the resultant interference. In this work, we use a homogeneous \ac{PPP} to model the locations of different controllers, thereby accounting for the impact of co-channel transmissions. Our main contributions are as follows.
\begin{itemize}
    \item \emph{Communication-control co-design:} We study two types of control systems: restless and rested. In a restless system, control inputs are set to zero if the transmission from the controller fails, while in a rested system, state-based feedback is used if the transmission fails. For both systems, we present a communication-control co-design with a wireless channel acknowledgment-based control input design and a block ALOHA protocol to ensure that the system state is driven to the desired state. Based on the design, we introduce the notion of \emph{block controllability}. For restless systems, block controllability is influenced by the burst length, or the number of consecutive successful transmissions from the controller to the actuator. In contrast, for rested systems, block controllability is determined by the total number of successful transmissions.
    \item \emph{Statistical analysis for the restless system:} For the typical {restless} system , we derive the conditional success probability of a transmission, given a realization of the point process. Additionally, we analyze the conditional distribution $\bar{F}_{{\rm RL,blk}}$ of the burst length. The fraction of controller-controlled system pairs that achieve block controllability is characterized by the meta distribution of the burst length. Since deriving the exact meta distribution is challenging, we characterize the system's performance based on the first moment of $\bar{F}_{{\rm RL,blk}}$ weighted by the channel access probability.
    \item {\emph{Statistical analysis for the rested system:} For the rested controlled system, we derive the conditional success probability of transmission and compare it with the restless controlled system. Leveraging this comparison, we analyze the distribution of successful transmissions for the typical rested system in a given time horizon, based on the meta distribution of success probabilities. Then, we discuss the reconstruction of the approximate meta distribution using a finite number of moments, formulating it as a \ac{HMP}.}
    \item \emph{Learning-based channel access:} To find the optimal online channel access, i.e., to select the optimal ALOHA parameter, we formulate an \ac{MAB} problem, whose reward is based on the number of successful transmissions within a block. The \ac{MAB} is solved using a variant of the \ac{TS} algorithm, which sequentially chooses ALOHA parameters for successive blocks in a centralized manner. We prove that, for both rested and restless systems, regret grows sub-linearly with respect to the number of blocks and the number of slots per block. Our numerical results show how our statistical framework supports the \ac{TS} algorithm and highlight that online learning of the optimal ALOHA parameter can improve network control performance.
    \item {\emph{System design insights:} 
    Our research shows that for a rested system, maintaining a high channel access probability is beneficial, even with increased interference, due to competition for transmission resources. Conversely, a restless system secures access for an entire block after one success, allowing it to reduce access probability to limit interference and boost success. We also discuss how controller density and the required number of successful control inputs impact performance, offering network operators key design and dimensioning guidelines from our co-design framework.}
\end{itemize}
Overall, our research establishes new synergies among networked control systems, stochastic geometry, and learning-based channel access strategies, offering valuable insights from both theoretical and algorithmic perspectives. We make several new contributions compared to the conference version~\cite{ghatak2024poisson}. We extend the analysis of the restless system in \Cref{prop:regretanalysis} to show that the regret of the \ac{TS} algorithm is sub-linear in block length and the number of blocks. Further, we introduce rested control systems, which are more flexible than restless systems, as their controllability depends on total successful transmissions. We analyze the success probability distribution for rested systems using meta distribution. We show that frequent channel access benefits rested systems, while restless systems reduce access frequency to limit interference.

\section{Networked Control System Model}
Consider a large network comprising multiple processes and systems, each with its own independent control mechanisms. Each system in the network is represented by a controller-controlled system pair, where the controlled system consists of an actuator, plant, and sensor.
We model our network using a Poisson bipolar point process consisting of controller-controlled system pairs in the two-dimensional Euclidean plane $\bb{R}^2$~\cite{haenggi2015meta}. The locations of the controllers are modeled as a homogeneous \ac{PPP} $\Phi$ of intensity $\lambda$. The controller-controlled system pairs are indexed as $i=0,1,2,\ldots$, with $r_i$ denoting the distance between the $i$th controller and the typical controlled system. The controller-controlled system pair index $i=0$ refers to the typical pair. Without loss of generality, we assume a typical controlled system at the origin, and the typical controller is at a distance of $r_0$ away from the origin. We note that Slivnyak's theorem~\cite{chiu2013stochastic} guarantees that the process $\Phi$ conditioned on the location of the typical pair is a \ac{PPP} has the same statistics as $\Phi$. 

 The controller periodically receives the system state from the sensor through a dedicated channel. Based on this system state estimate, it computes the control inputs and communicates them to the actuator via a shared wireless link. Since all controller-actuator pairs utilize the same communication resources (time and frequency), interference can occur, resulting in control performance degradation. Our aim is to design a probabilistic channel access protocol that learns and optimizes the control performance of the system. The following subsections elaborate our system model.
\begin{figure}
    \centering
    \includegraphics[width = 0.8\linewidth]{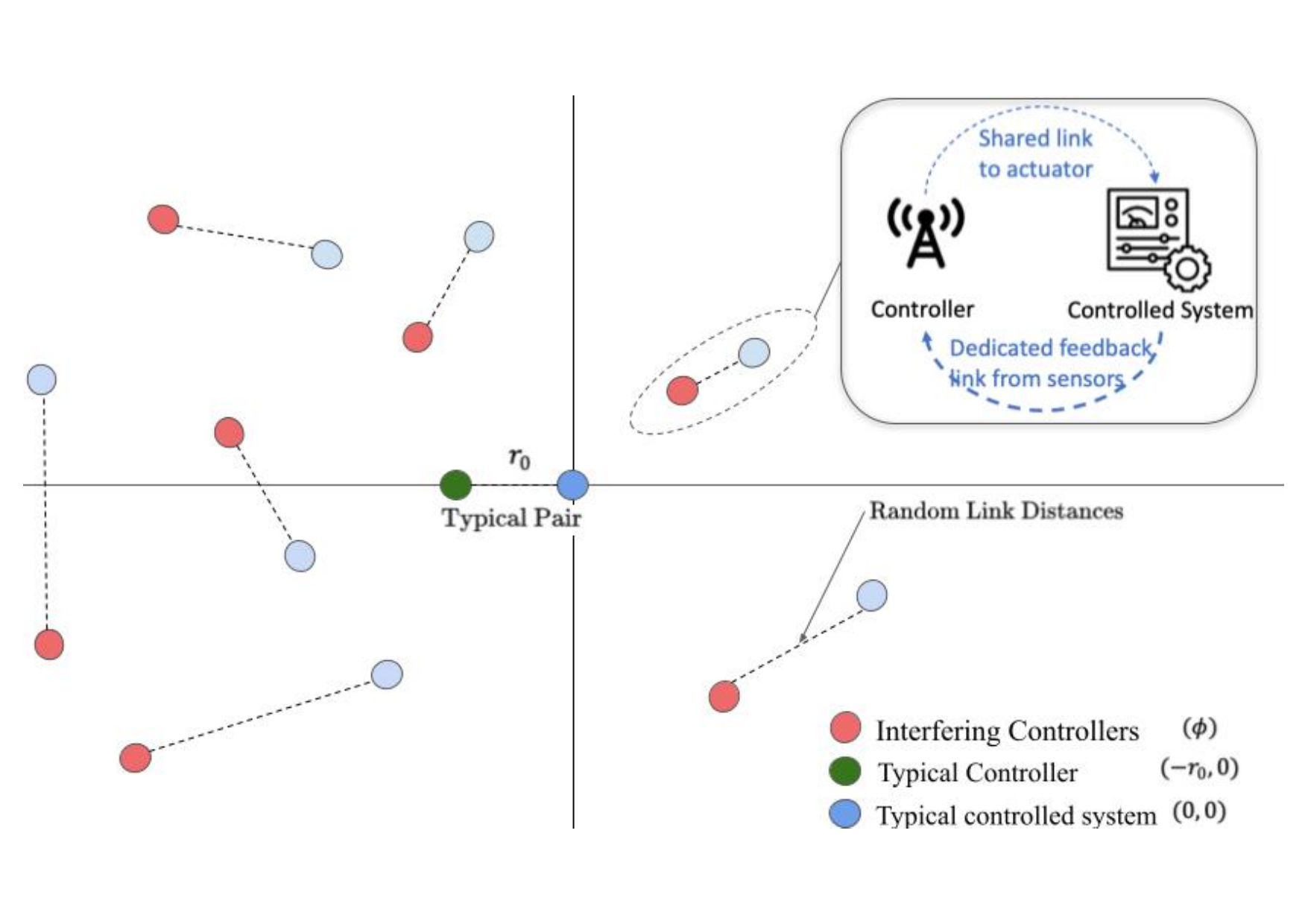}
    \caption{Illustration of the Poisson network of controller-controlled system pairs. Here, the actuators receive the control input from the corresponding controller via a shared link, whereas the sensors send their observations to the controller via a dedicated link.}
    \label{fig:illustration_PPP}
    \vspace{-0.2cm}
\end{figure}
\subsection{Controller-Controlled System Model}
Each controlled system in the network is modeled as a discrete-time linear dynamical system. The typical system is
\begin{equation}\label{eq:controlsys_model}
    \bs{x}(t+1) = \bs{A}\bs{x}(t)+\bs{B}\bs{u}(t)+\bs{v}(t).
\end{equation}
Here, $\bs{x}(t)\in\bb{R}^n$ is the state of the typical controlled system in the network at discrete time $t$ and $\bs{u}(t)\in\bb{R}^m$ and $\bs{v}(t)\in\bb{R}^n$ are its input and process noise, respectively, at time $t\in\bb{Z}_+$. Also, $\bs{A}\in\bb{R}^{n\times n}$ and $\bs{B}\in\bb{R}^{n\times m}$ denote the state and input matrices of the system. Also, we assume that the column space of $\bs{B}$ contains the column space of $\bs{I}-\bs{A}$. The system aims to drive and retain the system state at a desired state $\bs{x}_{\rm des}$ with a suitable choice of control inputs $\bs{u}(t)$.

We assume the actuator computing capabilities are limited,  with all computations being offloaded to the controller. For this, the sensors observe the system state $\bs{x}(t)$ periodically at time $t=kT$ for $k\in\bb{Z}_+$, where the block length $T$ is the delay between two consecutive observations. Here, $k$ is referred to as the block index, and the controller gets the state estimate at the beginning of each block. For simplicity, we assume that the state $\bs{x}(t)$ is communicated to the controller via a reliable link with dedicated resources, ensuring good system state estimate $\hat{\bs{x}}(kT)$ at the controller at time $kT$ for $k\in\bb{Z}_+$. 

The controller estimates the control inputs needed to drive the system to the desired state $\bs{x}_{\rm des}$, which are sequentially communicated to the actuator via a shared link. The controller transmissions are synchronized with the controlled system evolution in \eqref{eq:controlsys_model}. Therefore, the discrete-time index $t$ also denotes the transmission slot index. Due to noise and interference from the other transmitting controllers, the controlled system may not always correctly decode the received signal, leading to an unsuccessful transmission. The success of the transmission depends on the wireless channel, modeled next.

\vspace{-0.2cm}
\subsection{Channel Model and Transmission Success}\label{sec:channelmodel}

Each wireless link experiences fast fading, assumed to be Rayleigh distributed with parameter 1. The fast fading is independent across time and spatially independent across all links. Also,  $\rho$ and $\alpha$ denote the path-loss constant and the path-loss exponent of the channel, respectively~\cite{3GPPChannel}. We denote the channel noise power as $N_0$ and the transmit power as $\eta$. 

A controller's decision to transmit in a time slot $t$ depends on the channel access protocol. Let the channel access state $C_i(t)\in\{0,1\}$ be the indicator variable representing whether the $i$th controller transmits at time $t$, for $i=0,1,\ldots$. Given that the typical controller transmits, the \ac{SINR} at the typical actuator is
\begin{equation}\label{eq:SINR_defn}
    \xi(t) = \frac{\eta\rho |h_0(t)|^2r_0^{-\alpha}}{N_0 + \sum_{i\in\phi} C_i(t)\eta\rho|h_i(t)|^2r_i^{-\alpha}},
\end{equation}
where $h_i(t)$ is the channel fading between the $i$th controller and the typical actuator.

If the \ac{SINR} at time $t$ exceeds a threshold $\gamma>0$ (depending on the application and the receiver hardware), the actuator correctly decodes the received signal, thereby implying a successful transmission. 
The successful transmission by the typical controller at time $t$  is indicated by $S(t)\in\{0,1\}$, i.e., 
\begin{equation}\label{eq:ack_t}
    S(t) = \begin{cases}
        1 &\text{if } C_0(t)\xi(t)>\gamma,\\
        0 &\text{otherwise.}
    \end{cases}
\end{equation}
Also, the controller receives an acknowledgment of transmission $S(t)$ from the actuator. The controller uses $S(t)$ to design the control inputs and its data transmission, as discussed next.

\section{Communication-Control Co-design}
The co-design involves two key components: a control input design based on periodic sensor measurements and transmission acknowledgments, and a random access data transmission scheme that operates independently, without centralized scheduling. We present the co-design approach for two types of controlled systems, one without utilizing direct state feedback control and the other with it, referred to as \emph{restless} and \emph{rested systems}, respectively.

\vspace{-0.2cm}
\subsection{Restless System}
The restless system lacks a direct state-dependent feedback loop, and its actuator applies only the control inputs that are either sent by the controller or are predefined and stored. %We next discuss the control design for this type of system.
%\subsubsection{Control Input Design}\label{restless_control}
The control input design at the controller is based on its estimate $\hat{\bs{x}}(t)$ of the system state $\bs{x}(t)$. 
Since the noise term $\bs{v}(t)$ is unknown at the controller,  from \eqref{eq:controlsys_model}, this estimate is
\begin{equation}\label{eq:estimate}
    \hat{\bs{x}}(t) = \bs{A}^{t-kT}\hat{\bs{x}}(kT)+\sum_{\tau=kT}^{t-1}\bs{A}^{t-\tau-1} S(\tau)\bs{B}\bs{u}(\tau).
\end{equation}
The controller designs $v$ control inputs $\{\hat{\bs{u}}(t+\tau),\tau=0,1,\ldots,v-1\}$ such that $\bs{x}_{\rm des}=\hat{\bs{x}}(t+v)$, i.e.,
\begin{equation}\label{eq:control_design}
    \bs{x}_{\rm des} = \bs{A}^{v}\hat{\bs{x}}(t)+\sum_{\tau=0}^{v-1}\bs{A}^{v-1-\tau} \bs{B}\hat{\bs{u}}(t+\tau).
\end{equation}
However, we note that for any given states $\bs{x}_{\rm des}$ and $\hat{\bs{x}}(t)$, the above equation has a solution if $v$ exceeds the controllability index of the linear dynamical system in  \eqref{eq:controlsys_model}. Furthermore, the controllability index is further upper-bounded by the degree of the minimum polynomial of $\bs{A}$~\cite{chen1984linear}. So, at time $t=kT$, we choose $v$ as the degree of the minimum polynomial of $\bs{A}$. With this choice, the controller solves \eqref{eq:control_design} using the least squares solution to obtain
\begin{multline}\label{eq:controldesign_est}
    \begin{bmatrix}
        \hat{\bs{u}}(t)^{\top}&
        \hat{\bs{u}}(t+1)^{\top}
        \ldots
        \hat{\bs{u}}(t+v-1)^{\top}
    \end{bmatrix}^{\top} \\= \bs{\Psi}^{\dagger}\ls \bs{x}_{\rm des} - \bs{A}^{v}\hat{\bs{x}}(t)\rs,
\end{multline}
with $t=kT$, and we define
\begin{equation*}
    \bs{\Psi}=\begin{bmatrix}
    \bs{A}^{v-1}\bs{B} & \bs{A}^{v-2}\bs{B} & \ldots & \bs{B}
\end{bmatrix}.
\end{equation*}

 If the transmission of $\hat{\bs{u}}(kT)$ is successful, i.e., $S(kT)=1$, it continues to transmit the next control inputs $\hat{\bs{u}}(kT+1),\hat{\bs{u}}(kT+2),\ldots,\hat{\bs{u}}(kT+L-1)$ until either $S(kT+L-1)=0$ or $L=v$. If $S(kT+L-1)=0$ for some $L\le v$, the controller recomputes the next set of $v$ inputs using \eqref{eq:controldesign_est} with $t=kT+L$ and repeats the above steps. If $S(kT+L-1)=1$ for $L=1,\ldots, v$, all the $v$ control inputs designed by the controller have reached the actuator. Here, $L$ denotes the number of consecutive successful transmissions, referred to as the burst length.
 
 Once all the $v$ control inputs are applied, the state estimate at the controller is $\bs{x}_{\rm des}$. Then, the input required to retain the system state at the desired state $\bs{x}_{\rm des}$ can be computed as
\begin{equation}\label{eq:control_fixed}
        \bar{\bs{u}} = \bs{B}^{\dagger}\lb \bs{I} - \bs{A}\rb\bs{x}_{\rm des}.
\end{equation}
The above control input ensures that $\bs{x}(t+1)=\bs{x}_{\rm des}$ if $\bs{x}(t)=\bs{x}_{\rm des}$ under our assumption that the columns space of $\bs{B}$ contains the column space of $\bs{I}-\bs{A}$. Further, we note that the control input in \eqref{eq:control_fixed} is independent of the system state and can be computed offline. Consequently, the control inputs given by \eqref{eq:control_fixed} can be pre-calculated and stored at the actuator. Once the actuator receives $v$ consecutive control inputs from the controller, it can repeatedly use the inputs from \eqref{eq:control_fixed} without any additional computations. This continues till the end of the block when the sensor sends the state information to the controller at time $(k+1)T$. Overall, in block $k$ the actuator applies the following control inputs,
\begin{equation*}
    \bs{u}(t) = \begin{cases}
        S(t)\hat{\bs{u}}(t) & \text{if }\sum_{\tau=kT}^{t}S(\tau)\le v,\\
        \bar{\bs{u}} & \text{otherwise.}
    \end{cases}
\end{equation*}
Note that $\bs{u}(t)$ is the control input received by the actuator, while $\hat{\bs{u}}(t)$ is the control input sent by the controller. 

Along with the control input design in \eqref{eq:controldesign_est}, the controller also needs to devise a channel access policy. 
Under the above Poisson network model, we aim to optimize channel access policy based on actuator acknowledgment $S(t)$ in \eqref{eq:ack_t} to maximize its probability of successfully steering its state estimate in \eqref{eq:estimate} to the desired state $\bs{x}_{\rm des}$. For the restless system, the controller can drive its state estimate to the desired state $\bs{x}_{\rm des}$ only if $v$ consecutive transmission from the controller to the actuator is successful. This notation of control performance is captured by the random event defined below.
\begin{definition}\label{defn:blockcontrol}
    Consider the typical restless system $(\bs{A},\bs{B})$ with $v$ being the degree of the minimum polynomial of $\bs{A}$. The system is said to \emph{block controllable} in a given block $k$ if there is a run of at least $v$ ones in the sequence $S(kT),S(kT+1),\ldots,S((k+1)T-1)$, where $S(t)\in\{0,1\}$ given by \eqref{eq:ack_t} denotes the success of transmission from the typical controller to the typical actuator at time $t$.
\end{definition}

Having presented the restless system, its control strategy, and controllability, we now describe the rested system before presenting the channel access policy.
\begin{algorithm}
\caption{Control Design and Data Transmission of Typical Controller for the Restless System}
\begin{algorithmic}[1]
    \State \textbf{Parameters:} Block $k$, system matrices $\bs{A}$, $\bs{B}$, desired state $\bs{x}_{\rm des}$, channel access $\{C_0(kT+\tau),\tau=0,1,\ldots,T-1\}$
    \State \textbf{Initialization:} Time $t=kT$, burst length $L=0$ 
    \State Define $v $ as  the degree of the minimum polynomial of $\bs{A}$
    \State Receive $\hat{\bs{x}}(kT)$ from the sensor
        \For {$t=kT,kT+1,\ldots,(k+1)T-1$}
                       
            \If {$C_0(t)=1$}
             \Statex {\it // if the last tx fails, redesign inputs}
                \If {$L=0$}
                    \State Compute $\hat{\bs{u}}(t),\hat{\bs{u}}(t+1),\ldots,\hat{\bs{u}}(t+v-1)$~via~\eqref{eq:controldesign_est}
                                                \Statex {\it // if block controllable, transmit dummy data }
                \ElsIf{$L=v$} 
                    \State Set $\hat{\bs{u}}(t)=\bs{1}$ 
                \EndIf
                \State Transmit $\hat{\bs{u}}(t)$ and receive $S(t)$
                \Statex {\it // if tx fails, reset burst length to 0; otherwise increment}
                \If {$L<v$}
                    \State $L\leftarrow S(t)(L+1)$
                \EndIf
             \EndIf
        \EndFor
\end{algorithmic}
\label{algo:design_restless}
\end{algorithm}

\subsection{Rested System}
The rested system uses a direct state-dependent feedback loop to handle interruptions from missing control signals due to unsuccessful transmission from the controller. The actuators switch between controller inputs and the local feedback loop based on the success of the transmission, as elaborated below.

At the beginning of each block $k$, similar to the restless system, the controller computes $v$ control inputs based on the state estimate $\hat{\bs{x}}(kT)$ from the sensor. These inputs, $\{\hat{\bs{u}}(kT+\tau),\tau=0,1,\ldots,v-1\}$, are given by \eqref{eq:controldesign_est} with $t=kT$ and transmitted sequentially to the actuator.  If the transmission is successful, the corresponding input is applied by the actuator, otherwise, the actuator switches to a state-dependent feedback loop instead of applying zero input.  When $S(t) = 0$, 
\begin{equation*}
    \bs{u}(t) = \bs{B}^{\dagger}\lb \bs{I} - \bs{A}\rb\bs{x}(t),
\end{equation*}
which ensures that the controller's state estimate in \eqref{eq:estimate} does not change, i.e., $\hat{\bs{x}}(t+1)=\hat{\bs{x}}(t)$ when $S(t)=0$. Thus, the control input design at the controller need not be recalculated if the transmission fails. Consequently, the controller keeps attempting to send the same control input until it succeeds. Once all $v$ control inputs from the controller are successfully received by the actuator and applied, the actuator switches to the feedback loop till the end of the block. Overall, in block $k$ the actuator applies the following control inputs,
\begin{equation*}
    \bs{u}(t) = \begin{cases}
        \hat{\bs{u}}(t) & \text{if }S(t)=1 \;\text{,}\;\sum_{\tau=kT}^{t}S(\tau)\!\le v,\\
        \bs{B}^{\dagger}\lb \bs{I} - \bs{A}\rb\bs{x}(t) & \text{otherwise.}
    \end{cases}
\end{equation*}
We reiterate that the key difference between rested and restless systems is that the actuator in the rested system does not use predefined stored inputs if the transmission fails or once all $v$ inputs are received; instead, it switches to feedback-based control. Additionally, the controller computes the control inputs only once per block, with no recomputation after transmission failures. This approach simplifies the control input design process compared to the restless system.  The complete algorithm for data transmission is provided in \Cref{algo:design_restless}.
\begin{algorithm}
\caption{Control Design and Data Transmission of Typical Controller for the Rested System}
\begin{algorithmic}[1]
    \State \textbf{Parameters:} Block $k$, system matrices $\bs{A}$, $\bs{B}$, desired state $\bs{x}_{\rm des}$, channel access $\{C_0(kT+\tau),\tau=0,1,\ldots,T-1\}$
    \State \textbf{Initialization:} Time $t=kT$, Number of successes $\Lambda=0$ 
    \State Define $v$ as  the degree of the minimum polynomial of $\bs{A}$
    \State Receive $\hat{\bs{x}}(kT)$ from the sensor
    \State Compute $\hat{\bs{u}}(kT),\hat{\bs{u}}(kT+1),\ldots,\hat{\bs{u}}(kT+v-1)$~via~\eqref{eq:controldesign_est}
        \For {$t=kT,kT+1,\ldots,(k+1)T-1$}
        \Statex {\it // if not block controllable, transmit the next control input  }
            \If {$\Lambda< v$ and $C_0(t)=1$}
                \State Transmit $\hat{\bs{u}}(kT+\Lambda)$ and receive $S(t)$
                \State Update $\Lambda \leftarrow \Lambda+S(t)$
                                            \Statex {\it // if block controllable, transmit dummy data }
            \ElsIf{$\Lambda= v$ and $C_0(t)=1$} 
                \State Transmit $\hat{\bs{u}}(t)=\bs{1}$ 
            \EndIf
            
        \EndFor
\end{algorithmic}
\label{algo:design_rested}
\end{algorithm}

In the rested system, due to the feedback loop at the actuator, the controller can drive its state estimate to the desired state $x_{\rm des}$ if there are $v$ successful (not necessarily consecutive) transmissions in a block of $T$ slots. Consequently, the notation of control performance is captured by the following event.
\begin{definition}\label{defn:blockcontrolrested}
    Consider the typical rested system $(\bs{A},\bs{B})$ with $v$ being the degree of the minimum polynomial of $\bs{A}$. The system is said to \emph{block controllable} in a given block $k$ if the sequence $S(kT),S(kT+1),\ldots,S((k+1)T-1)$ satisfies $\sum_{\tau=0}^{T-1}S(kT+\tau)\ge v$, where $S(t)\in\{0,1\}$ given by \eqref{eq:ack_t} denotes the success of transmission from the typical controller to the typical actuator at time $t$.
\end{definition}

\subsection{ALOHA-based Channel Access Policy}
After designing the control inputs and transmission scheme, the next crucial step is to develop the channel access state $C_i(t)$, without coordination among the controllers. We note that the controllers lack knowledge of the spatial configuration (density and locations) of interfering controller-system pairs in the network, as well as their transmission states. Therefore, we explore a random channel access strategy, specifically the ALOHA protocol - a widely used multiple-access method for transmitting data over a shared network channel~\cite{baccelli2006aloha}.

We consider two versions of the protocol: classical ALOHA and a modified variant known as \emph{block ALOHA}. In classical ALOHA, each controller accesses the channel in every time slot (not block) with a probability $q\in\cl{P}$ selected from a finite set of $D$ access probabilities $\cl{P} \coloneq \{p_1, p_2, \ldots, p_D\}$. The channel access probability $q$ is called the ALOHA parameter. As a result, the set of controllers transmitting simultaneously may vary from slot to slot within a block. Clearly, this strategy is not suitable for restless systems as it requires consecutive successful transmission, motivating the alternative approach of block ALOHA.  In block ALOHA channel access protocol, each controller is either active or idle during an entire block $k$ with a probability $q\in \cl{P}$. Therefore, the channel access state {$C_i(t)$ for $i=0,1,\ldots$} remains the same for all values of $t=kT,kT+1,\ldots,(k+1)T-1$ within a given block $k$.

Under these protocols, the controller continues to transmit even after $v$ successful transmissions, which are consecutive for the restless system but not necessarily consecutive for the rested system. Following these $v$ transmissions, the controller sends dummy data to the actuator. While the actuators do not use these inputs, they continue to send acknowledgments $S(t)$ upon successfully receiving data from the controller. This feedback allows the controller to learn about its channel conditions and adjust the ALOHA parameter accordingly.
 
 The rest of the paper studies the optimal ALOHA parameter that maximizes the success probability for a typical controller, beginning with a statistical analysis of block controllability.

\section{Statistical Analysis of Controllability}
This section first characterizes the probability of a successful transmission at a given time for a typical pair under the two ALOHA protocols. Building on the above success probabilities, we then characterize the statistics of the controllability metrics: the burst length for the restless system and the total number of successful transmissions for the rested system. 

\subsection{Success Analysis of ALOHA Protocols}
We first look at the block ALOHA protocol assuming that the ALOHA parameter is fixed at $q$ for a given block. We start with the probability of a successful transmission at a given time $t$ for a typical pair. 
\begin{proposition}
    Consider a network following a given PPP $\phi$, whose $i$th controller is at a distance of $r_i$ from the typical actuator, and $C_i(k)$ indicates whether or not it transmits in a given block $k$ of $T$ slots. Given that the typical controller transmits in block $k$, the conditional success probability $P_{\rm blk}$ of the typical controller at a given time $t$ within the block is
    \begin{equation*}
  P_{\rm blk} = e^{-\frac{\gamma N_0}{\eta\rho {r^{-\alpha}_0}}} \prod\limits_{i \in \phi: \;C_i(k)=1} \frac{{r}_0^{-\alpha}}{{r}_0^{-\alpha}+{{\gamma}r_i^{-\alpha}}},
\end{equation*}
where the parameters $P,\rho,N_0$ and $\alpha$ are defined in \eqref{eq:SINR_defn} and $\gamma$ is the SINR threshold for successful transmission in \eqref{eq:ack_t}.
\label{prop:Pblk_defn}
\end{proposition}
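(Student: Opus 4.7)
The plan is to compute the conditional success probability by conditioning on the realization of the point process $\phi$ and on the channel access states $\{C_i(k)\}$, and then averaging only over the fast fading. Since block ALOHA fixes $C_i(t)$ for every slot within block $k$, I can replace $C_i(t)$ by $C_i(k)$ throughout. The event $\{\xi(t)>\gamma\}$ from \eqref{eq:SINR_defn} and \eqref{eq:ack_t} is equivalent to
\begin{equation*}
    |h_0(t)|^2 > \frac{\gamma}{\eta\rho r_0^{-\alpha}}\Bigl(N_0 + \sum_{i\in\phi} C_i(k)\,\eta\rho |h_i(t)|^2 r_i^{-\alpha}\Bigr).
\end{equation*}

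Next, I would exploit that $|h_0(t)|^2$ is exponentially distributed with unit mean (since $h_0$ is Rayleigh with parameter $1$, as specified in \Cref{sec:channelmodel}) and independent of all the $|h_i(t)|^2$. Applying the exponential tail and pulling the deterministic noise term out gives
\begin{equation*}
    P_{\rm blk} = e^{-\frac{\gamma N_0}{\eta\rho r_0^{-\alpha}}} \, \mathbb{E}\!\left[\prod_{i\in\phi} \exp\!\left(-\frac{\gamma C_i(k) |h_i(t)|^2 r_i^{-\alpha}}{r_0^{-\alpha}}\right)\right],
\end{equation*}
where the expectation is only over the interferer fading variables.

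Then, using the spatial independence of the $\{|h_i(t)|^2\}_{i\ge 1}$ stated in \Cref{sec:channelmodel}, the expectation of the product factorizes. For each $i$ with $C_i(k)=0$ the factor is $1$, so only transmitting interferers contribute. For each $i$ with $C_i(k)=1$, $|h_i(t)|^2$ is again unit-mean exponential, and the moment generating function identity $\mathbb{E}[e^{-s|h_i|^2}]=1/(1+s)$ with $s=\gamma r_i^{-\alpha}/r_0^{-\alpha}$ yields the factor $r_0^{-\alpha}/(r_0^{-\alpha}+\gamma r_i^{-\alpha})$. Multiplying the factors gives exactly the claimed expression.

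There is no real obstacle here; the argument is the standard Rayleigh-fading Laplace-transform manipulation used throughout stochastic geometry. The only points that warrant attention are that the conditioning is on $\phi$ and on $\{C_i(k)\}$ (so these are treated as deterministic in the expectation), and that block ALOHA is needed to justify replacing the slot-level $C_i(t)$ with the block-level $C_i(k)$; for classical ALOHA the same computation applies with $C_i(t)$ in place of $C_i(k)$.
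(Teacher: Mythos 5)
Your proposal is correct and follows essentially the same route as the paper's proof: condition on $\phi$ and the access states, use the exponential tail of $|h_0(t)|^2$ to pull out the noise term, factorize over the independent interferer fading variables, and apply the Laplace transform of the unit-mean exponential to each active interferer. Your added remarks on the block-versus-slot indexing of $C_i$ are a sound clarification but do not change the argument.
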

\begin{proof}
    See~\Cref{app:_csp}.
\end{proof}

For the block ALOHA case, the set of interfering controllers remains constant throughout the block, so successful transmissions across slots within a block are conditionally independent given the active transmitters. Thus, \Cref{prop:Pblk_defn} accounts for the channel access states $C_i$. For classical ALOHA, given $C_i(t)$, \Cref{prop:Pblk_defn} holds as well. However, in classical ALOHA, successful transmissions across slots within the same block are independent and identically distributed for a given access probability. Therefore,  we average out the randomness in the set of interfering controllers within a block, leading to the following success probability.
\begin{proposition}
Consider a network following a given PPP realization $\phi$, whose $i$th controller is at a distance of $r_i$ from the typical actuator, and $q$ indicates whether or not a controller transmits in a slot of the given block $k$. Given that the typical controller transmits in block $k$, the conditional success probability $P_{\rm cls}$ of the typical controller at a given time $t$ within the block is
    \begin{equation*}
  P_{\rm cls} = e^{-\frac{\gamma N_0}{\eta\rho {r^{-\alpha}_0}}} \prod\limits_{i \in \phi}\ls q\frac{{r}_0^{-\alpha}}{{r}_0^{-\alpha}+{{\gamma}r_i^{-\alpha}}}+1-q\rs,
\end{equation*}
where the parameters $P,\rho,N_0$ and $\alpha$ are defined in \eqref{eq:SINR_defn} and $\gamma$ is the SINR threshold for successful transmission in \eqref{eq:ack_t}.
\end{proposition}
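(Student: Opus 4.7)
The plan is to bootstrap off Proposition 1 (the block ALOHA case), since both results share the same starting point: conditioning on the SINR event, the noise gives the exponential factor $e^{-\gamma N_0 /(\eta \rho r_0^{-\alpha})}$, and each interferer that is actively transmitting contributes a factor $r_0^{-\alpha}/(r_0^{-\alpha}+\gamma r_i^{-\alpha})$ by taking expectation over the Rayleigh fading $|h_i(t)|^2 \sim \mathrm{Exp}(1)$. Thus, conditional on the PPP realization $\phi$ and on the per-slot channel access indicators $\{C_i(t)\}_{i\in\phi}$, Proposition~1 already gives
\begin{equation*}
\Pr\!\left(\xi(t)>\gamma \,\middle|\, \phi, \{C_i(t)\}, C_0(t)=1\right) = e^{-\frac{\gamma N_0}{\eta\rho r_0^{-\alpha}}} \prod_{i\in\phi}\left(\frac{r_0^{-\alpha}}{r_0^{-\alpha}+\gamma r_i^{-\alpha}}\right)^{C_i(t)},
\end{equation*}
where I have rewritten the restricted product $\prod_{i:C_i=1}$ as a full product with exponent $C_i(t)\in\{0,1\}$ to facilitate the averaging step.

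Next I would exploit the defining property of classical ALOHA: within a slot, each controller independently decides to transmit with probability $q$, so the $C_i(t)$ for $i\in\phi$ are i.i.d.\ Bernoulli$(q)$, independent of the fading and of $\phi$. Because the target quantity $P_{\rm cls}$ is the conditional success probability given $\phi$ only (the transmission pattern of interferers within the slot having been averaged out), I take the expectation of the displayed expression with respect to $\{C_i(t)\}_{i\in\phi}$. Independence across $i$ lets the expectation pass inside the product, yielding
\begin{equation*}
P_{\rm cls} = e^{-\frac{\gamma N_0}{\eta\rho r_0^{-\alpha}}} \prod_{i\in\phi} \mathbb{E}_{C_i(t)}\!\left[\left(\frac{r_0^{-\alpha}}{r_0^{-\alpha}+\gamma r_i^{-\alpha}}\right)^{C_i(t)}\right].
\end{equation*}

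Finally, for a Bernoulli$(q)$ variable $C_i(t)$ and any constant $a_i\in[0,1]$, $\mathbb{E}[a_i^{C_i(t)}] = q\, a_i + (1-q)$, and applying this with $a_i = r_0^{-\alpha}/(r_0^{-\alpha}+\gamma r_i^{-\alpha})$ produces the claimed formula. The main subtlety, and the only place where care is needed, is justifying that these per-slot Bernoulli randomizations can be pulled through the product independently and are independent of the fading realizations used to derive Proposition~1; this follows from the standard modeling assumption that ALOHA coin flips, Rayleigh fading, and the PPP are mutually independent. Everything else is a direct recycling of Proposition~1 and a one-line Bernoulli expectation, so I expect no real obstacle beyond book-keeping.
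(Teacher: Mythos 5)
Your proposal is correct and follows exactly the paper's route: the paper likewise starts from Proposition 1 and averages over the i.i.d.\ Bernoulli$(q)$ access indicators $C_i(t)$, with your write-up merely making explicit the independence step and the one-line computation $\mathbb{E}[a_i^{C_i(t)}]=qa_i+(1-q)$ that the paper leaves implicit.
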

\begin{proof}
From \Cref{prop:Pblk_defn}, the success probability is 
\begin{equation*}
  P_{\rm cls} = \bb{E}\ls e^{-\frac{\gamma N_0}{\eta\rho {r^{-\alpha}_0}}} \prod\limits_{i \in \phi: \;C_i(t)=1} \frac{{r}_0^{-\alpha}}{{r}_0^{-\alpha}+{{\gamma}r_i^{-\alpha}}}\rs.
  %& = e^{-\frac{\gamma N_0}{\eta\rho {r^{-\alpha}_0}}} \prod\limits_{i \in \phi}\ls q\frac{{r}_0^{-\alpha}}{{r}_0^{-\alpha}+{{\gamma}r_i^{-\alpha}}}+1-q\rs.
\end{equation*}
Since each of the $C_i(t)$ is one with probability $q$ and zero with probability $1-q$, the result follows.
\end{proof}

Next, we analyze the restless and rested systems with the two ALOHA protocols, assuming a fixed ALOHA parameter $q$ for each block. 

\subsection{Statistical Analysis of Restless System}
We first look at the restless system with block ALOHA protocol, where conditioned on $\Phi$, the success event in \Cref{prop:Pblk_defn} is independent across the time slots when the typical controller transmits. Also, the probability $P_{\rm blk}$ changes across blocks as $C_i$ changes. We first compute the probability of block controllability, i.e., the burst length $L$ exceeds the desired length~$v$. 
    \begin{proposition}\label{prop:blocksuccess}
    Consider a network of restless systems following a given realization $\phi$ of \ac{PPP} $\Phi$ described in \Cref{prop:Pblk_defn}. Given that the typical controller transmits in block $k$, the probability that the system is block controllable is $\bar{F}_{{\rm RL,blk}}(v)=\bb{P}\lb L \geq v \mid \Phi\rb $ is
    \begin{multline*}
     \bar{F}_{{\rm RL,blk}}(v)  = \!\sum_{l = 1}^{\lfloor \frac{T+1}{v+1} \rfloor} \!(-1)^{l + 1} \!\ls P_{\rm blk} + \frac{T - lv +1}{l} (1 - P_{\rm blk})\rs  \\ \times\binom{T- lv}{l - 1}P_{\rm blk}^{lv} (1 - P_{\rm blk})^{l - 1}. 
    \end{multline*}
\end{proposition}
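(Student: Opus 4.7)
The plan is to reduce the statement to a classical Bernoulli run-probability computation and then handle it via a carefully tailored inclusion-exclusion. Under block ALOHA the interfering set $\{i \in \phi : C_i(k) = 1\}$ is frozen for the entire block, so by \Cref{prop:Pblk_defn} and the independence of the fast fading across slots, the acknowledgments $S(kT), S(kT+1), \ldots, S((k+1)T-1)$ are, conditionally on $\Phi$, i.i.d.\ Bernoulli with parameter $P_{\rm blk}$. Block controllability in the restless system coincides with the event that this $T$-length Bernoulli sequence contains a run of at least $v$ consecutive ones, so it suffices to compute that probability.

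A direct inclusion-exclusion on the events $A_i = \{S(kT + i - 1) = \cdots = S(kT + i + v - 2) = 1\}$ is awkward because the size of the union $\bigcup_k [i_k, i_k + v - 1]$ depends on the overlap pattern of the selected indices. I would instead work with the \emph{run-start} events $B_i = A_i \cap \{i = 1 \text{ or } S(kT+i-2) = 0\}$. Since the leftmost run of length $\geq v$ (if one exists) must begin at some position of this form, $\{L \geq v\} = \bigcup_{i=1}^{T-v+1} B_i$ and
\[
\bar{F}_{{\rm RL,blk}}(v) = \sum_{l\geq 1}(-1)^{l+1}\sum_{1\leq i_1<\cdots<i_l\leq T-v+1} \bb{P}(B_{i_1}\cap\cdots\cap B_{i_l}\mid \Phi).
\]

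The two key calculations are the intersection probability and the tuple count. If $i_{k+1} \leq i_k + v$ for some $k$, then $B_{i_k}$ forces position $i_{k+1}-1$ to be $1$ while $B_{i_{k+1}}$ forces it to be $0$, so the intersection is empty; only tuples with the separation $i_{k+1} \geq i_k + v + 1$ contribute. For those, the forced sites are $lv$ ones (the $l$ disjoint length-$v$ runs) together with $l-1$ zeros if $i_1 = 1$, or $l$ zeros if $i_1 \geq 2$, giving intersection probabilities $P_{\rm blk}^{lv}(1-P_{\rm blk})^{l-1}$ and $P_{\rm blk}^{lv}(1-P_{\rm blk})^{l}$ respectively. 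Under the separation constraint, the substitution $j_r = i_r - (r-1)(v+1)$ turns the tuples into non-decreasing sequences $j_1 \leq \cdots \leq j_l$ with $j_l \leq T - lv - l + 2$, and a standard multiset enumeration yields $\binom{T-lv}{l-1}$ tuples with $i_1 = 1$ and $\binom{T-lv}{l}$ with $i_1 \geq 2$. Summing both contributions, splitting $1 = P_{\rm blk} + (1 - P_{\rm blk})$ in the $i_1 = 1$ term, and invoking Pascal's identity $\binom{T-lv}{l-1} + \binom{T-lv}{l} = \binom{T-lv+1}{l}$ together with $\binom{T-lv+1}{l} = \binom{T-lv}{l-1}(T-lv+1)/l$ recovers the stated prefactor $[P_{\rm blk} + (T-lv+1)(1-P_{\rm blk})/l]\binom{T-lv}{l-1}$. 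The upper summation limit $\lfloor (T+1)/(v+1)\rfloor$ follows from the packing condition $l(v+1) - 1 \leq T$ for $l$ runs of length $v$ separated by $l-1$ zeros within $T$ slots.

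The main obstacle is the choice of inclusion-exclusion events: the naive $A_i$'s make the intersection probabilities depend on the cluster structure of overlapping intervals, whereas the run-start events $B_i$ cleanly decouple into a boundary case ($i_1 = 1$) and a bulk case ($i_1 \geq 2$), each contributing independently. The subsequent shift-substitution for the tuple count and the Pascal rearrangement to match the exact closed form are routine but indispensable.
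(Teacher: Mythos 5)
Your proposal is correct, and it verifies against the closed form: expanding the stated prefactor via $\tfrac{T-lv+1}{l}\binom{T-lv}{l-1}=\binom{T-lv+1}{l}=\binom{T-lv}{l}+\binom{T-lv}{l-1}$ and splitting $1=P_{\rm blk}+(1-P_{\rm blk})$ exactly reproduces your two-term sum $\binom{T-lv}{l-1}P_{\rm blk}^{lv}(1-P_{\rm blk})^{l-1}+\binom{T-lv}{l}P_{\rm blk}^{lv}(1-P_{\rm blk})^{l}$, and your upper limit $\lfloor (T+1)/(v+1)\rfloor$ matches the packing constraint. The difference from the paper is one of self-containment rather than substance: the paper's proof is a one-line citation of de Moivre's classical solution for the probability of a run of at least $v$ successes in $T$ Bernoulli trials (via \cite{hald2005history}), whereas you re-derive that formula from scratch. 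Your choice of run-\emph{start} events $B_i$ (rather than the naive window events $A_i$) is precisely the device that makes the inclusion-exclusion tractable, since it forces the selected runs to be disjoint and separated by an explicit zero, cleanly splitting the count into the $i_1=1$ boundary case and the $i_1\geq 2$ bulk case. What the citation buys is brevity; what your derivation buys is a verifiable, self-contained argument that also makes transparent where the conditional i.i.d.\ Bernoulli structure (the frozen interferer set under block ALOHA, given $\Phi$) enters. Both are valid proofs of the same statement.
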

\begin{proof}
    The result follows from the de Moivre's solution~\cite[Section 22.6]{hald2005history}. 
\end{proof}
The above result establishes the probability of block controllability for a given realization $\phi$ of \ac{PPP} $\Phi$ with a given value of $C_i$ for the controllers. Next, we look at the control performance averaged over the network realization using the moments of conditional success probability.

\begin{theorem}
    Consider a network of restless systems following a PPP $\Phi$ with density $\lambda$ and block ALOHA parameter $q$. The probability of block controllability for the typical pair is
    \begin{multline*}
        P_{\rm RL,blk} = q\bb{E}\ls \bar{F}_{{\rm RL,blk}}(v) \rs = q\sum_{l = 1}^{\lfloor \frac{T+1}{v+1} \rfloor} (-1)^{l + 1} \binom{T - lv}{l - 1}   \\
     \times \bigg( \sum_{{\ell} = 0}^{l - 1} \binom{l-1}{{\ell}} (-1)^{\ell} \zeta(lv + 1 +{\ell}) \\ +  \frac{T - lv +1}{l} \sum_{{\ell} = 0}^l \binom{l}{{\ell}} (-1)^{\ell} \zeta(lv +{\ell}) \bigg).
    \end{multline*}
\label{theo:RL_blk}
Here, with the parameters $P,\rho,N_0$ and $\alpha$ in \eqref{eq:SINR_defn}, $\gamma$ is the SINR threshold for successful transmission in \eqref{eq:ack_t}, we define
\begin{equation*}
            \zeta(l) = e^{-\frac{\gamma N_0l}{\eta\rho r^{-\alpha}_0}} \exp\lb 2\pi\lambda q I(l)\rb,
        \end{equation*}
         with the function $I(l)$ given as
\begin{equation*}
I(l) = \sum_{\ell =1}^{l}\binom{l}{\ell} \int_0^\infty\lb \frac{-\gamma z^{-\alpha}}{r^{-\alpha}_0 + \gamma z^{-\alpha}}\rb^{\ell} \dif z.    
\end{equation*}
\end{theorem}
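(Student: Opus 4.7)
The plan is to condition on the typical controller transmitting (an event of probability $q$), apply \Cref{prop:blocksuccess} to get $\bar{F}_{{\rm RL,blk}}(v)$ as a polynomial in the conditional success probability $P_{\rm blk}$, and then take expectation over $\Phi$ using the Laplace/PGFL machinery for the PPP. Writing $P_{\rm RL,blk} = q\,\bb{E}[\bar{F}_{{\rm RL,blk}}(v)]$ and substituting the de Moivre expression, the summand carries two contributions: one proportional to $P_{\rm blk}^{lv+1}(1-P_{\rm blk})^{l-1}$ and one proportional to $\tfrac{T-lv+1}{l}P_{\rm blk}^{lv}(1-P_{\rm blk})^l$. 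Expanding each $(1-P_{\rm blk})^{l-1}$ and $(1-P_{\rm blk})^l$ by the binomial theorem reduces everything to a weighted linear combination of pure moments $\bb{E}[P_{\rm blk}^n]$; these moments are exactly the quantities $\zeta(n)$ in the statement, and matching the two resulting inner sums to $\sum_\ell \binom{l-1}{\ell}(-1)^\ell\zeta(lv+1+\ell)$ and $\sum_\ell\binom{l}{\ell}(-1)^\ell\zeta(lv+\ell)$ gives the theorem.

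The key analytical step is therefore to establish $\bb{E}[P_{\rm blk}^n] = \zeta(n)$. Using \Cref{prop:Pblk_defn},
\begin{equation*}
    P_{\rm blk}^n = e^{-\frac{n\gamma N_0}{\eta\rho r_0^{-\alpha}}}\prod_{i\in\phi:\,C_i(k)=1}\lb \frac{r_0^{-\alpha}}{r_0^{-\alpha}+\gamma r_i^{-\alpha}}\rb^{n},
\end{equation*}
and the noise exponential pulls out as $e^{-n\gamma N_0/(\eta\rho r_0^{-\alpha})}$. For the product, I would first absorb the independent Bernoulli$(q)$ thinning by $C_i(k)$ into the product, rewriting $\bb{E}_{C}[\cdot]$ as replacing each factor $g(r_i)^n$ with $q\,g(r_i)^n + (1-q)$, and then apply the PGFL of the homogeneous PPP of intensity $\lambda$ to obtain
\begin{equation*}
    \bb{E}\!\ls\prod_{i\in\Phi}\lb q\,g(r_i)^n + 1-q\rb\rs = \exp\!\lb -2\pi\lambda q\int_0^\infty(1-g(z)^n)\,\dif\mu(z)\rb.
\end{equation*}
Expanding $g(z)^n = (1 - \gamma z^{-\alpha}/(r_0^{-\alpha}+\gamma z^{-\alpha}))^n$ by the binomial theorem turns $(1-g(z)^n)$ into a sum over $\ell=1,\ldots,n$ with signs $(-1)^{\ell+1}$, and after negating, yields exactly $-I(n)$ inside the exponent; thus the exponent becomes $+2\pi\lambda q I(n)$, producing $\zeta(n)$.

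The final step is bookkeeping: substitute $\bb{E}[P_{\rm blk}^{lv+1+\ell}] = \zeta(lv+1+\ell)$ into the first contribution and $\bb{E}[P_{\rm blk}^{lv+\ell}]=\zeta(lv+\ell)$ into the second, pull the binomial coefficient $\binom{T-lv}{l-1}$ and the factor $(-1)^{l+1}$ outside, and re-assemble over $l=1,\ldots,\lfloor(T+1)/(v+1)\rfloor$. Multiplying by the leading $q$ from the conditioning step yields the stated formula.

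I expect the main obstacle to be the PGFL computation, specifically verifying the interchange of expectation and product under the Bernoulli thinning and justifying the binomial-inside-integral rearrangement that produces $I(n)$ in closed form. The integrals $\int_0^\infty (\gamma z^{-\alpha}/(r_0^{-\alpha}+\gamma z^{-\alpha}))^\ell\,\dif\mu(z)$ are finite only for $\alpha>2$ and $\ell\ge 1$ (the $\ell=0$ tail, which is why $1$ was subtracted, is not integrable), so the term-by-term split must be argued carefully; once that is done, the rest is algebraic manipulation and binomial bookkeeping.
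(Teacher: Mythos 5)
Your proposal is correct and follows essentially the same route as the paper's proof: expand the de Moivre expression for $\bar{F}_{{\rm RL,blk}}(v)$ via the binomial theorem into pure moments of $P_{\rm blk}$, and evaluate $\bb{E}[P_{\rm blk}^n]$ through the probability generating functional of the thinned PPP followed by a second binomial expansion to obtain $I(n)$. The only cosmetic difference is that you average over the Bernoulli access indicators explicitly inside the PGFL of intensity $\lambda$, whereas the paper invokes independent thinning to a PPP of intensity $\lambda q$ directly; the two computations are equivalent.
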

\begin{proof}
See~\Cref{app:RL_blk}.
\end{proof}

Next, we look at the classical ALOHA protocol, where the ALOHA parameter corresponds to the probability of transmitting in each slot. We obtain a result similar to \Cref{prop:blocksuccess} by replacing $P_{\rm blk}$ with $qP_{\rm cls}$, as follows.
\begin{theorem}
    Consider a network of restless systems following a PPP $\Phi$ with density $\lambda$ and classical ALOHA parameter $q$. The probability of block controllability for the typical pair is
    \begin{multline*}
        P_{\rm RL,cls} = \sum_{l = 1}^{\lfloor \frac{T+1}{v+1} \rfloor} (-1)^{l + 1} \binom{T - lv}{l - 1}   \\
     \times \bigg( \sum_{{\ell} = 0}^{l - 1} \binom{l-1}{{\ell}} (-1)^{\ell} \zeta'(lv + 1 +{\ell}) \\ +  \frac{T - lv +1}{l} \sum_{{\ell} = 0}^l \binom{l}{{\ell}} (-1)^{\ell} \zeta'(lv +{\ell}) \bigg).
    \end{multline*}
\label{theo:RL_cls}
Here, with the parameters $P,\rho,N_0$ and $\alpha$ in \eqref{eq:SINR_defn}, $\gamma$ is the SINR threshold for successful transmission in \eqref{eq:ack_t}, we define
\begin{align*}
            \zeta'(l) &=\bb{E}\ls (qP_{\rm cls})^{l}\rs= q^le^{-\frac{\gamma N_0l}{\eta\rho r^{-\alpha}_0}} \exp\lb 2\pi\lambda q I(l)\rb\\
I'(l) &= \sum_{\ell =1}^{l}\binom{l}{\ell} \int_0^\infty\lb \frac{-q\gamma z^{-\alpha}}{r^{-\alpha}_0 + \gamma z^{-\alpha}}+q-1\rb^{\ell} \dif z.    
\end{align*}
\end{theorem}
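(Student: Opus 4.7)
The plan is to parallel the derivation of \Cref{theo:RL_blk}, adjusting for the fact that under classical ALOHA both the typical controller and every interferer redraw their channel access indicators in every slot rather than holding them fixed for the entire block. The argument splits naturally into three steps: establishing a conditional-on-$\Phi$ i.i.d.\ structure of the per-slot successes, applying the de Moivre run-length identity of \Cref{prop:blocksuccess}, and evaluating the resulting moments via the PGFL of the \ac{PPP}.

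The key observation, and the only place where the classical and block protocols truly diverge, is that, given $\Phi$, the sequence $\lc S(kT),S(kT+1),\ldots,S((k+1)T-1)\rc$ is i.i.d.\ Bernoulli. This is because the indicators $C_0(t)$, $\lc C_i(t)\rc_{i\in\phi}$, and the fading coefficients $\lc h_i(t)\rc$ are all independent across $t$. Writing
\begin{equation*}
\bb{P}\lb S(t)=1\mid \Phi \rb = q\cdot \bb{E}\ls \bb{P}\lb \xi(t)>\gamma \mid C_0(t)=1, \lc C_i(t)\rc,\Phi, \lc h_i(t)\rc\rb\rs = qP_{\rm cls},
\end{equation*}
the outer $q$ captures the typical controller's per-slot access decision, while the inner expectation over the interferer access indicators yields $P_{\rm cls}$ by the proposition immediately preceding the present theorem.

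With this i.i.d.\ structure in hand, \Cref{prop:blocksuccess} applies verbatim with $P_{\rm blk}$ replaced by the per-slot probability $qP_{\rm cls}$, producing a conditional expression for $\bb{P}(L\geq v\mid\Phi)$ that is polynomial in $qP_{\rm cls}$. Taking expectation over $\Phi$, pushing it through the finite outer sum over $l$, and binomially expanding the factors $(1-qP_{\rm cls})^{l-1}$ and $(1-qP_{\rm cls})^{l}$ in each de Moivre term reduces the problem to evaluating the raw moments $\zeta'(j) = \bb{E}\ls (qP_{\rm cls})^{j}\rs = q^{j}\bb{E}\ls P_{\rm cls}^{j}\rs$. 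Substituting the product form of $P_{\rm cls}$ gives
\begin{equation*}
P_{\rm cls}^{j} = e^{-\frac{\gamma N_0 j}{\eta\rho r_0^{-\alpha}}}\prod_{i\in\phi}\ls q\frac{r_0^{-\alpha}}{r_0^{-\alpha}+\gamma r_i^{-\alpha}}+1-q\rs^{j},
\end{equation*}
after which the PGFL of the homogeneous \ac{PPP}, applied exactly as in \Cref{theo:RL_blk}, produces the exponential factor $\exp\lb 2\pi\lambda q I'(j)\rb$, with the integrand obtained by a binomial expansion of the per-interferer bracket to the $j$th power. Collecting terms reproduces the claimed formula.

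The main obstacle is bookkeeping rather than anything conceptually new. Two easy-to-slip points require care: (i) unlike \Cref{theo:RL_blk}, no outer $q$ multiplies the final sum, since $q$ is already absorbed into every per-slot success probability under classical ALOHA; and (ii) the integrand of $I'(l)$ inherits the classical ALOHA per-interferer factor $\ls qr_0^{-\alpha}/(r_0^{-\alpha}+\gamma z^{-\alpha})+1-q\rs^{\ell}$ directly, without the separate PPP thinning by $q$ that appeared as an outer factor in the block ALOHA proof. A naive transcription of the proof of \Cref{theo:RL_blk} would otherwise produce spurious or missing $q$ factors in both the prefactor of the sum and the integrand of $I'(l)$.
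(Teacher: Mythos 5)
Your proposal is correct and takes essentially the route the paper intends: the paper omits this proof entirely, stating only that it mirrors that of \Cref{theo:RL_blk} with $P_{\rm blk}$ replaced by $qP_{\rm cls}$, and your three steps (conditional i.i.d.\ Bernoulli$(qP_{\rm cls})$ successes across slots given $\Phi$, de Moivre's run identity from \Cref{prop:blocksuccess}, binomial expansion plus the PGFL) are exactly that argument filled in. Your two cautionary points are well taken and in fact sharper than the printed statement: dropping the outer $q$ is right, and your PGFL computation over the \emph{unthinned} PPP of density $\lambda$ with the per-interferer factor $q r_0^{-\alpha}/(r_0^{-\alpha}+\gamma z^{-\alpha})+1-q$ yields $\exp\lb 2\pi\lambda I'(l)\rb$ with the $q$ inside the integrand, which is the correct moment; the theorem's displayed $\zeta'(l)$, which instead invokes $I(l)$ with a $q$-thinned density, is inconsistent with its own definition of $I'(l)$ and should be read as a typo.
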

We skip the proof as it is similar to that of \Cref{theo:RL_blk}. Further, we can define the meta distribution of the burst length $L$ as the distribution of $\bar{F}_{{\rm RL,blk}}(v))=\bb{P}\lb L \geq v \mid \Phi\rb$ defined in \eqref{prop:blocksuccess} as $\cl{M}_{\rm RL}(v, \beta) = \bb{P}\lb \bar{F}_{{\rm RL,blk}}(v) \geq \beta\rb$.
To clarify, $\cl{M}_{\rm RL}(v, \beta)$ represents the fraction of control systems in the network that experience a burst length of at least $v$ in a sequence of $T$ transmissions in at least $\beta$ fraction of the network realization (or equivalently, due to ergodicity, in at least $\beta$ fraction of transmissions episodes). Unlike the meta distribution of the SINR in wireless networks, we observe that the meta distribution of the burst length is challenging to derive even indirectly via its moments. However, if all the controllers have the same desired burst length $v$, the first moment of the distribution is given by $P_{{\rm RL,blk}}(v)$ in \Cref{theo:RL_blk}. 

\subsection{Statistical Analysis of Rested System}
For the rested system, the controllability metric is $\Lambda$, which is the total number of successes within a block $k$, which is relatively easy to compute. The \acp{CCDF} $\bar{F}_{\rm RD,blk}(v)$ and $\bar{F}_{\rm RD,cls}(v)$ of $\Lambda$ for block and classical ALOHA, respectively, are
\begin{align*}
    \bar{F}_{\rm RD,blk}(v) &= \bb{P}(\Lambda\geq v|\Phi) = q\sum_{l = v}^T \binom{T}{l} (P_{\rm blk})^l (1 - P_{\rm blk})^{T-l}\\
    \bar{F}_{\rm RD,cls}(v) &= \bb{P}(\Lambda\geq v|\Phi) = \sum_{l = v}^T \binom{T}{l} (qP_{\rm cls})^l (1 - q P_{\rm cls})^{T-l}
\end{align*}
Then, we can determine the probability that the typical rested control system is block controllable in at least a specified fraction of network realizations. This probability is denoted as the meta distribution of the number of successes $\Lambda$, as discussed below. We start with the block ALOHA protocol.

\begin{theorem}
    Consider a network of rested systems following a PPP $\Phi$ with density $\lambda$ and block ALOHA parameter $q$. The typical controlled system is block controllable in at least $\beta$ fraction of the network realizations with probability
\begin{equation*}
    \cl{M}_{\rm RD,blk}(v, \beta) = \frac{1}{2} + \frac{1}{\pi} \int_0^\infty \frac{\Im\lb e^{-js \log(P_{\rm blk}(q))} \zeta_{\cl{I}}(s)\rb}{s} {\rm d}s,
\end{equation*}
where $\Im(\cdot)$ represents the imaginary part of the argument and
\begin{align*}
    P_{\rm blk}(q) &= \min\lc p\in[0,1]: \;q\sum_{l = v}^T \binom{T}{l} p^l (1 - p)^{T-l}\geq \beta\rc\\
    \zeta_{\cl{I}}(s) &=   \!e^{-\frac{js \gamma N_0}{\eta\rho r_0^{-\alpha}}}\!\exp \!\ls \! - 2\pi q\lambda\!\!\int_{0}^{\infty}\!\!1\!-\!\lb \frac{{r}_0^   {-\alpha}}{{r}_0^{-\alpha}+{{\gamma}z^{-\alpha}}}\!\rb^{js} \!\!\!{\rm d}z\rs\!.
\end{align*}
\label{theo:RD_blk}
\end{theorem}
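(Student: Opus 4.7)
The plan is to recast the meta distribution as a tail probability for $\log P_{\rm blk}$ and then invert it via the Gil--Pelaez formula. Since the inner CCDF
\begin{equation*}
\bar{F}_{\rm RD,blk}(v) = q\sum_{l=v}^{T}\binom{T}{l}P_{\rm blk}^{l}(1-P_{\rm blk})^{T-l}
\end{equation*}
is continuous and strictly increasing in $P_{\rm blk}\in[0,1]$ (its derivative equals $qT\binom{T-1}{v-1}p^{v-1}(1-p)^{T-v}$, positive on the open interval), the event $\{\bar{F}_{\rm RD,blk}(v)\ge\beta\}$ coincides with $\{P_{\rm blk}\ge P_{\rm blk}(q)\}$, where $P_{\rm blk}(q)$ is defined exactly as in the statement. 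Consequently, $\cl{M}_{\rm RD,blk}(v,\beta)=\bb{P}(\log P_{\rm blk}\ge \log P_{\rm blk}(q))$.

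I would then apply the Gil--Pelaez inversion formula to $Y\coloneq\log P_{\rm blk}$, namely
\begin{equation*}
\bb{P}(Y\ge y)=\frac{1}{2}+\frac{1}{\pi}\int_{0}^{\infty}\frac{\Im\bigl(e^{-jsy}\varphi_Y(s)\bigr)}{s}\,\mathrm{d}s.
\end{equation*}
Setting $y=\log P_{\rm blk}(q)$ reproduces the structural form appearing in the theorem, so the remaining task is to identify the characteristic function $\varphi_Y(s)=\bb{E}[P_{\rm blk}^{\,js}]$ with $\zeta_{\cl I}(s)$. Using \Cref{prop:Pblk_defn}, the deterministic noise factor yields the prefactor $\exp(-js\gamma N_0/(\eta\rho r_0^{-\alpha}))$. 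The set of interferers with $C_i=1$ is an independent Bernoulli thinning of $\Phi$, and hence a PPP of intensity $q\lambda$. Applying its probability generating functional in polar coordinates produces
\begin{equation*}
\bb{E}\!\left[\prod_{i}\!\left(\frac{r_0^{-\alpha}}{r_0^{-\alpha}+\gamma r_i^{-\alpha}}\right)^{\!js}\right]=\exp\!\left(-2\pi q\lambda\!\int_{0}^{\infty}\!\left[1-\!\left(\frac{r_0^{-\alpha}}{r_0^{-\alpha}+\gamma z^{-\alpha}}\right)^{\!js}\right]z\,\mathrm{d}z\right),
\end{equation*}
which matches $\zeta_{\cl I}(s)$ up to the standard polar Jacobian, completing the identification.

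The main technical obstacle is justifying the Gil--Pelaez step cleanly: one needs the law of $Y=\log P_{\rm blk}$ to be atom-free at $\log P_{\rm blk}(q)$, and the oscillatory integral in $s$ to converge. Atom-freeness follows from the diffuse marginals of the PPP, since displacing any single interferer changes $P_{\rm blk}$ continuously; convergence of the PGFL integrand is controlled by the standard near-zero and tail estimates used in stochastic-geometric interference analyses provided $\alpha>2$. The strict monotonicity established at the outset also guarantees that $P_{\rm blk}(q)$ is uniquely defined whenever $\beta\le q$, which is the only regime in which the meta distribution is nonzero.
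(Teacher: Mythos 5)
Your proposal follows essentially the same route as the paper's proof: rewrite $\cl{M}_{\rm RD,blk}(v,\beta)$ as the tail probability $\bb{P}(P_{\rm blk}\ge P_{\rm blk}(q))$ using monotonicity of the binomial tail in $P_{\rm blk}$, invert via Gil--Pelaez, and evaluate the imaginary moments $\bb{E}[P_{\rm blk}^{js}]$ through the probability generating functional of the thinned PPP of intensity $q\lambda$; your added remarks on strict monotonicity, atom-freeness, and integral convergence only make the argument more careful than the paper's. The single point of divergence is the radial Jacobian factor $z$ in the PGFL integral, which you correctly retain for a planar PPP but which is absent from the stated $\zeta_{\cl{I}}(s)$ (and from the paper's other PGFL computations), so the mismatch you flag lies in the paper's formula rather than in your derivation.
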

\begin{proof}
See~\Cref{app:RD_blk}.
\end{proof}

The next theorem characterizes classical ALOHA.

\begin{theorem}\label{theo:RD_blk2}
    Consider a network of restless systems following a PPP $\Phi$ with density $\lambda$ and classical ALOHA parameter $q$. The probability that the typical pair is block controllable in at least $\beta$ fraction of the network realizations is
\begin{equation*}
    \cl{M}_{\rm RD,cls}(v, \beta) = \frac{1}{2} + \frac{1}{\pi} \int_0^\infty \frac{\Im\lb e^{-js \log(P_{\rm cls}(q))} \zeta_{\cl{I}}'(s)\rb}{s} {\rm d}s,
\end{equation*}
where we define
\begin{align*}
    P_{\rm cls}(q) &= \min\lc p\in[0,1]: \;\!\sum_{l = v}^T \binom{T}{l} (qp)^l (1 - qp)^{T-l}\geq \beta\rc\\
        \zeta_{\cl{I}}'(s) &= e^{-\frac{js \gamma N_0}{\eta\rho r_0^{-\alpha}}}\\
&\times \exp \!\ls - 2\pi\lambda\!\int_{0}^{\infty}\!\!1\!-\!\lb  q\frac{{r}_0^{-\alpha}}{{r}_0^{-\alpha}+{{\gamma}z^{-\alpha}}}+1-q\rb^{js}\!\!{\rm d}z\rs.
\end{align*}
\label{theo:RD_cls}
\end{theorem}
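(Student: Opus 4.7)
The plan is to mirror the proof of \Cref{theo:RD_blk}, adapting it to the classical ALOHA case where the success probability $qP_{\rm cls}$ replaces $P_{\rm blk}$. The central tool will be the Gil-Pelaez inversion formula applied to the random variable $\log P_{\rm cls}$ (with the randomness coming from the PPP $\Phi$ via the interference field), together with the probability generating functional (PGFL) of the PPP to evaluate its characteristic function.

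First I would rewrite the target probability. By definition, $\cl{M}_{\rm RD,cls}(v,\beta) = \bb{P}(\bar{F}_{\rm RD,cls}(v) \geq \beta)$, and for classical ALOHA we have $\bar{F}_{\rm RD,cls}(v) = \sum_{l=v}^{T}\binom{T}{l}(qP_{\rm cls})^l(1-qP_{\rm cls})^{T-l}$. The binomial tail is monotonically nondecreasing in $qP_{\rm cls}$ (and hence in $P_{\rm cls}$ since $q$ is fixed), so the event $\{\bar{F}_{\rm RD,cls}(v) \geq \beta\}$ coincides with $\{P_{\rm cls} \geq P_{\rm cls}(q)\}$, where $P_{\rm cls}(q)$ is exactly the threshold defined in the theorem statement as the smallest $p$ that pushes the binomial tail above $\beta$. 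Equivalently, the event is $\{\log P_{\rm cls} \geq \log P_{\rm cls}(q)\}$.

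Next I would invoke Gil-Pelaez: for any real-valued random variable $X$ with characteristic function $\phi_X(s) = \bb{E}[e^{jsX}]$,
\begin{equation*}
\bb{P}(X \geq x) = \frac{1}{2} + \frac{1}{\pi}\int_0^\infty \frac{\Im\lb e^{-jsx}\phi_X(s)\rb}{s}\, \dif s.
\end{equation*}
Setting $X = \log P_{\rm cls}$ and $x = \log P_{\rm cls}(q)$ reduces the problem to computing $\bb{E}[P_{\rm cls}^{js}]$. Substituting the product form of $P_{\rm cls}$ from its proposition, pulling out the deterministic noise factor as $e^{-js\gamma N_0/(\eta\rho r_0^{-\alpha})}$, and applying the PGFL of the homogeneous PPP $\Phi$ of intensity $\lambda$ gives
\begin{equation*}
\bb{E}[P_{\rm cls}^{js}] = e^{-\frac{js\gamma N_0}{\eta\rho r_0^{-\alpha}}}\exp\ls -2\pi\lambda\int_0^\infty\lb 1 - \lb q\frac{r_0^{-\alpha}}{r_0^{-\alpha}+\gamma z^{-\alpha}}+1-q\rb^{js}\rb\dif z\rs,
\end{equation*}
which matches $\zeta_{\cl{I}}'(s)$ up to the sign convention absorbed into the definition. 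Assembling the Gil-Pelaez identity then delivers the stated expression.

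The main obstacle, compared with the block ALOHA version, is the appearance of the $(1-q)$ term inside the PGFL kernel: in the block case the PPP effectively thinned to rate $\lambda q$ and the kernel collapsed to the clean $r_0^{-\alpha}/(r_0^{-\alpha}+\gamma z^{-\alpha})$ form, whereas here we must retain the convex combination $q\frac{r_0^{-\alpha}}{r_0^{-\alpha}+\gamma z^{-\alpha}}+1-q$ raised to the complex exponent $js$. Verifying this step requires justifying that each $C_i(t)\in\{0,1\}$ is Bernoulli$(q)$ and independent across $i$, so that conditioning on $\Phi$ and averaging over the per-slot channel access produces precisely the mixed kernel; the remaining PGFL integration is then routine. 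A minor technical point is ensuring $P_{\rm cls}$ has no atom at the boundary so that the Gil-Pelaez formula applies in its standard form, which follows from the absolute continuity of the PPP-induced interference distribution.
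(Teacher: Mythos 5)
Your proposal is correct and follows essentially the same route as the paper, which omits an explicit proof of this theorem and implicitly defers to the argument in the appendix for \Cref{theo:RD_blk}: reduce the event to $\{P_{\rm cls}\geq P_{\rm cls}(q)\}$ via monotonicity of the binomial tail, apply Gil--Pelaez to $\log P_{\rm cls}$, and evaluate $\bb{E}[P_{\rm cls}^{js}]$ with the PGFL. Your adaptation correctly keeps the full intensity $\lambda$ with the mixed kernel $q\frac{r_0^{-\alpha}}{r_0^{-\alpha}+\gamma z^{-\alpha}}+1-q$ (rather than thinning to $q\lambda$ as in the block case), which is exactly the intended modification.
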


In the rested controlled system, unlike the restless case, the meta distribution of the conditional success probability directly characterizes controllability. However, analytically evaluating the integrals in $\zeta_{\cl{I}}(s)$ and $\zeta'_{\cl{I}}(s)$ from the Gil-Pelaez inversion theorem is generally intractable, and the numerical approximation is often computationally expensive. Recently, the Chebyshev-Markov method was introduced to reconstruct the meta distribution based on a finite sequence of moments~\cite{wang2023fast}. Since the moments of $P_{\rm blk}$ and $P_{\rm cls}$ have already been calculated as an intermediate step in the proof of \Cref{theo:RD_blk} and \Cref{theo:RD_blk2}, respectively, the Chebyshev-Markov method can be formulated as an \ac{HMP}~\cite{mnatsakanov2008hausdorff}. This formulation utilizes the property that if an infinite sequence of moments is monotonic, the distribution of a random variable exists and is unique. 
While exact distribution is based on infinite moments, using finite moments results in an \ac{HMP} reconstruction. For a detailed discussion of this, we refer the reader to~\cite{wang2023fast}.

This concludes our statistical analysis averaged over PPP. However, for a given realization, the optimal ALOHA parameter depends on the locations of the controllers, which is impractical for each controller to know globally. Therefore, the controllers can only learn the optimum value of the ALOHA parameter through the acknowledgments sent by the actuators. In the next section, we discuss a \ac{TS}-based learning algorithm to optimize the ALOHA parameter selection.

\section{\ac{TS}-Based ALOHA Parameter Selection}
In this section, we present a centralized online channel access learning policy based on statistical analysis. In our setting, a central decision-maker broadcasts the block parameter $q(k)\in\cl{P}$ to all the controllers in the network at the beginning of the block $k$. Then, the individual controllers set their channel access states $C_i(t)$ probabilistically based on this parameter. In block ALOHA, $C_i(t)$ remains constant, while in classical ALOHA, $C_i(t)$ varies across the slots. The central decision-maker receives the $T$ acknowledgments for each transmitting (active) controller and updates the channel access probability centrally. The goal of the decision-maker is to sequentially select channel access probabilities $q(1), q(2), \ldots, q(K)$ to maximize the probability of block controllability of the typical pair.

\subsection{Multiarm Bandit Formulation for the Restless System}
We formulate the decision-maker's task of sequentially selecting channel access probabilities across blocks as a \ac{MAB} problem. Here, the set of $D$ access probabilities $\cl{P} \coloneq \{p_1, p_2, \ldots, p_D\}$ represents the arms or actions. We observe the corresponding block controllability for every choice of arm, and the probability of block controllability can be learned over different blocks. 

We next define the reward of \ac{MAB} to reflect block controllability. A naive approach is to assign a reward of one if the typical pair is block controllable, and zero otherwise. However, in block ALOHA, controllability depends on consecutive successful transmissions, while in classical ALOHA, it depends on the total number of successful transmissions. Since the probability of consecutive successes increases with the total number of successes, and the rewards are independent and identically distributed across slots given $\phi$ and $q(k)$, the reward for both protocols can be based on the number of successful transmissions within a block. Consequently, an alternative reward formulation can set the reward to one if the transmission is successful, i.e., $S(t)=1$ and zero otherwise. As the formulation has multiple rewards per block, it leads to faster learning than the naive scheme. 

To complete the \ac{MAB} formulation, we need to define the regret. For a given choice of ALOHA parameters $q^{(K)}=\{q(1),q(2),\ldots,q(K)\}$ and the optimal ALOHA parameter $q_*$, we define the Bayesian regret over $K$ blocks averaged over the realizations of $\Phi$ and the reward sample path $S(t)$ of the algorithm as
\begin{align}
    \cl{R}(q^{(K)}) &= \bb{E}_{\Phi,S(t)} \ls \sum_{k = 0}^{K-1} \sum_{t = kT}^{(k+1)T-1} S(t)\middle| q(k)=q_*\rs \notag\\
&\quad-\bb{E}_{\Phi,S(t)} \ls \sum_{k = 0}^{K-1} \sum_{t = kT}^{(k+1)T-1} S(t)\middle|q(k)=q_k\rs\notag\\
&=\bb{E}_{\Phi} \ls \sum_{k = 0}^{K-1} T( q_*\bar{S}(q_*)-q_k\bar{S}(q_k))\rs, \label{eq:regret_defn}
\end{align}
where $\bar{S}(q)=\bb{E}(S(t)|q(k)=q)$, for $t=kT,kT+1,\ldots,(k+1)T-1$. Naturally, due to the randomness of $\Phi$ (technically, the bandit environment), the optimal ALOHA parameter $q_*$ is a random variable depending on $\Phi$. We next study a Bayesian approach for the \ac{MAB} problem, based on \ac{TS}.

\subsection{Block \ac{TS} for the Restless System}
In the \ac{TS} framework, the decision-maker chooses the ALOHA parameter based on its belief for the expected reward with each access probability in $\cl{P}$. Specifically, the decision-maker starts with a prior probability  $\theta_d$ of obtaining a reward of one when the ALOHA parameter $p_d\in\cl{P}$ is chosen, and a reward of zero with probability $1-\theta_d$~\cite{russo2018tutorial}. Each $\theta_k$ is an action’s success probability or mean reward. In the Bayesian model, $\theta_d$ is modeled as a Beta distribution (conjugate prior for the Bernoulli rewards) with parameters $a_d$ and $b_d$. 
For a selected ALOHA parameter $q(k)=p_d$ in a block $k$, if the typical controller is active, it experiences a reward of $S(t)$ for a given $t$ in the block $k$. Based on this reward, the posterior distribution of the chosen access probability $p_d$ is updated using the Bayes'rule~\cite{russo2018tutorial}. The Bernoulli rewards facilitate a simple update rule: the parameter $a_d$ is incremented by 1 if $S(t)=1$, while the parameter $b_d$ is decreased by one if $S(t)=0$. Nonetheless, the ALOHA parameter does not change until the next block begins. Therefore, we can do a batch update of the parameters at the end of a block. Particularly, at the end of block $k$, the parameter $a_d$ is incremented by $\sum_{t={kT}}^{(k+1)T-1}S(t)$, while the parameter $b_d$ is incremented by $T-\sum_{t=kT}^{(k+1)T-1}S(t)$. Since the typical pair is randomly selected from the distribution of the pairs in the network, the central transmitter can either randomize the pair's selection for observation or consider the average successes across all the pairs in the network. The overall algorithm is summarized in \Cref{algo:main}.

\begin{algorithm}
\caption{TS for Block ALOHA Parameter Selection}
\begin{algorithmic}[1]
    \State \textbf{Parameters:} Beta distribution parameters $\{a_d, b_d\}_{d=1}^{D}$
    \State \textbf{Initialization:} $a_d = b_d = 1, \forall  d \in \{1, 2, \ldots D\}$
    \For {$k = 0,1, 2, \ldots, K$}
        \State Sample $\theta_d \sim \cl{B}(a_d, b_d) \forall d \in \{1, 2, \ldots D\}$
        \State Set parameter $q(k) = p_{d^*}$, where  $d^* = \arg\max_d \theta_d$  
    \State \begin{minipage}[t]{0.8\linewidth}
    Observe acknowledgments $S(kT+\tau)$ for $\tau=0,1,\ldots,T-1$
    \end{minipage} 
    \State Update $a_{d^*}\leftarrow a_{d^*} + \sum_{\tau=0}^{T-1}S(kT+\tau)$
    \State Update $b_{d^*} \leftarrow b_{d^*} + T - \sum_{\tau=0}^{T-1}S(kT+\tau)$.
    \EndFor
\end{algorithmic}
\label{algo:main}
\end{algorithm}

The Bayesian regret for the block \ac{TS} algorithm is bounded, as presented next.
\begin{proposition}\label{prop:regretanalysis}
    The Bayesian regret of the block \ac{TS} algorithm in \Cref{algo:main} after $K$ blocks, defined in \eqref{eq:regret_defn},  is bounded as
    \begin{equation*}
        \cl{R}_{\rm TS}(K) \leq \cl{O}\lb \sqrt{TKD\log(K)}\rb,
    \end{equation*}
    where $T$ is the number of time slots per block and $D$ is the number of choices for access probabilities.
    \label{prop:regret}
\end{proposition}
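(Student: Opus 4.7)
The plan is to reduce the analysis to the Bayesian regret bound for Thompson sampling on a sub-Gaussian multi-armed bandit, following the information-theoretic framework of Russo and Van Roy. First, I would view each block $k$ as a single round of an equivalent bandit problem whose action is $q(k)\in\cl{P}$ and whose scalar observation is the block reward $R_k = \sum_{\tau=0}^{T-1} S(kT+\tau)$. Conditioned on $\Phi$ and on $q(k)=p_d$, the reward $R_k$ follows a Binomial distribution with parameters $T$ and $p_d \bar{S}(p_d)$, and the Beta update in \Cref{algo:main} is exactly the conjugate Bayes update for this Binomial likelihood with a Beta prior. Thus \Cref{algo:main} realizes standard posterior sampling on this equivalent bandit, and the regret in \eqref{eq:regret_defn} coincides with the ordinary Bayesian regret over $K$ rounds, with per-round gap $T\,\bb{E}_k[q_*\bar{S}(q_*) - q(k)\bar{S}(q(k))]$.

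Next, I would invoke the information-ratio inequality of Russo and Van Roy: for any $D$-armed bandit with $\sigma^2$-sub-Gaussian rewards, the Bayesian regret of posterior sampling over $K$ rounds is bounded by $\sqrt{\tfrac{1}{2}\Gamma^* K\, H(q_*)}$, where $\Gamma^* \le D\sigma^2$ is the worst-case information ratio and $H(q_*)$ is the entropy of the optimal arm under the prior. In our setting, $H(q_*) \le \log D \le \log K$, and by Hoeffding's lemma $R_k$ is $(T/4)$-sub-Gaussian, since it is a sum of $T$ conditionally i.i.d.\ Bernoulli random variables bounded in $[0,1]$. Substituting $\sigma^2 = T/4$ gives $\Gamma^* \le DT/4$ and hence
\[
\cl{R}_{\rm TS}(K) \le \sqrt{\tfrac{1}{8}\, DTK \log D} = \cl{O}\lb \sqrt{TKD\log K}\rb,
\]
which is the claimed bound.

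The main obstacle is the information-ratio bound itself: while the proof for single-observation Bernoulli rewards is classical, verifying it for the Binomial observation $R_k$ requires either a direct Pinsker-based calculation on the arm-posterior or a careful sub-Gaussian adaptation of the Russo--Van Roy argument that accounts for the $T$ within-block samples contributing to the mutual information $I_k(q_*;(q(k),R_k))$. A secondary technicality is that the bandit environment is itself random through the \ac{PPP} $\Phi$, so the information ratio and $H(q_*)$ should first be bounded conditionally on $\Phi$; the unconditional bound then follows by taking an outer expectation and applying Jensen's inequality to the concave map $\sqrt{\cdot}$, which only tightens the result. The remaining steps---the per-block regret decomposition and the entropy bound $H(q_*) \le \log D$---are routine once the batched-to-sequential reformulation in the first paragraph is in place.
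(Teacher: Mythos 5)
Your route is genuinely different from the paper's. The paper follows the confidence-bound decomposition for Bayesian regret (Lattimore--Szepesv\'ari, Ch.~36): it defines a concentration event $\cl{E}$ for the empirical block rewards, bounds $\bb{P}(\cl{E}^{\complement})$ by Hoeffding plus a union bound, and uses the posterior-matching property $\bb{E}[\nu(\mu_*)\mid\cl{F}_{k-1}]=\bb{E}[\nu(\mu(q(k)))\mid\cl{F}_{k-1}]$ with $\nu$ an upper-confidence map, summing the widths $\epsilon(k,d)$ via Cauchy--Schwarz. You instead invoke the Russo--Van Roy information-ratio bound $\sqrt{\tfrac12\Gamma^*K H(q_*)}$ after collapsing each block into one round with a scalar Binomial-type observation. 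Both are standard paths to $\tilde{\cl{O}}(\sqrt{KD})$; yours avoids the explicit confidence-set bookkeeping at the price of having to control the information ratio for a batched observation, which you correctly flag as the nontrivial step. Both proofs share the same unaddressed modeling caveat, namely that the Beta--Bernoulli posterior used by \Cref{algo:main} is not the exact posterior for the true (PPP-induced) reward law, which is what the posterior-matching / information-ratio machinery formally requires.

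There is, however, one concrete gap in your argument: the claim that $R_k$ is $(T/4)$-sub-Gaussian. Under block ALOHA the set of active interferers (and the typical controller's own access indicator $C_0$) is drawn \emph{once per block}, so conditioned only on $\Phi$ and $q(k)=p_d$ the $T$ slot outcomes are exchangeable but positively correlated --- $R_k$ is a mixture of $\mathrm{Binomial}(T,P_{\rm blk})$ over the random active set (and includes the atom $R_k=0$ when $C_0=0$), not a sum of $T$ conditionally i.i.d.\ Bernoullis given the information you condition on. Without independence, Hoeffding's lemma only gives that a variable bounded in $[0,T]$ is $(T^2/4)$-sub-Gaussian, and plugging $\sigma^2=T^2/4$ into your information-ratio bound yields $\cl{O}\lb T\sqrt{KD\log K}\rb$, losing a factor $\sqrt{T}$ relative to the claim. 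To recover $\sqrt{T}$ you must either fold the per-block active set into the bandit environment (so that the slot rewards become i.i.d.\ $\mathrm{Bernoulli}(P_{\rm blk})$ given the environment and the arm) or run the information-ratio argument at the slot level with $T$ observations per round, i.e., precisely the ``careful adaptation'' you defer. Two minor points: replacing $H(q_*)\le\log D$ by $\log K$ requires $D\le K$, and the statement that the Beta update is ``exactly'' the conjugate update for a Binomial likelihood is not quite accurate for the same correlation reason.
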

\begin{proof}
    See~\Cref{app:regret}.
\end{proof}
The result aligns with the classical TS, highlighting the dependence of regret on the system parameters $K$, $T$, and $D$. However, it does not capture the dependence on the \ac{PPP} density $\lambda$. Deriving the statistics of the frequentist regret requires addressing the challenges posed by the randomness of $\Phi$, which is deferred to future work.

\section{Numerical Results and Discussion}
\label{sec:NRD}
In this section, we present some numerical results to highlight the salient features of our analysis. Unless otherwise stated, we assume a transmit power of $\eta = 24$~dBm, a path-loss exponent of $\alpha = 2$ considering line-of-sight propagation. The carrier frequency is assumed to be 3.2~GHz with an operating bandwidth of 200~MHz. The distance between the typical controller-controlled system pair is set as 10~m.

\begin{figure}
    \centering
    \includegraphics[width=0.6\linewidth]{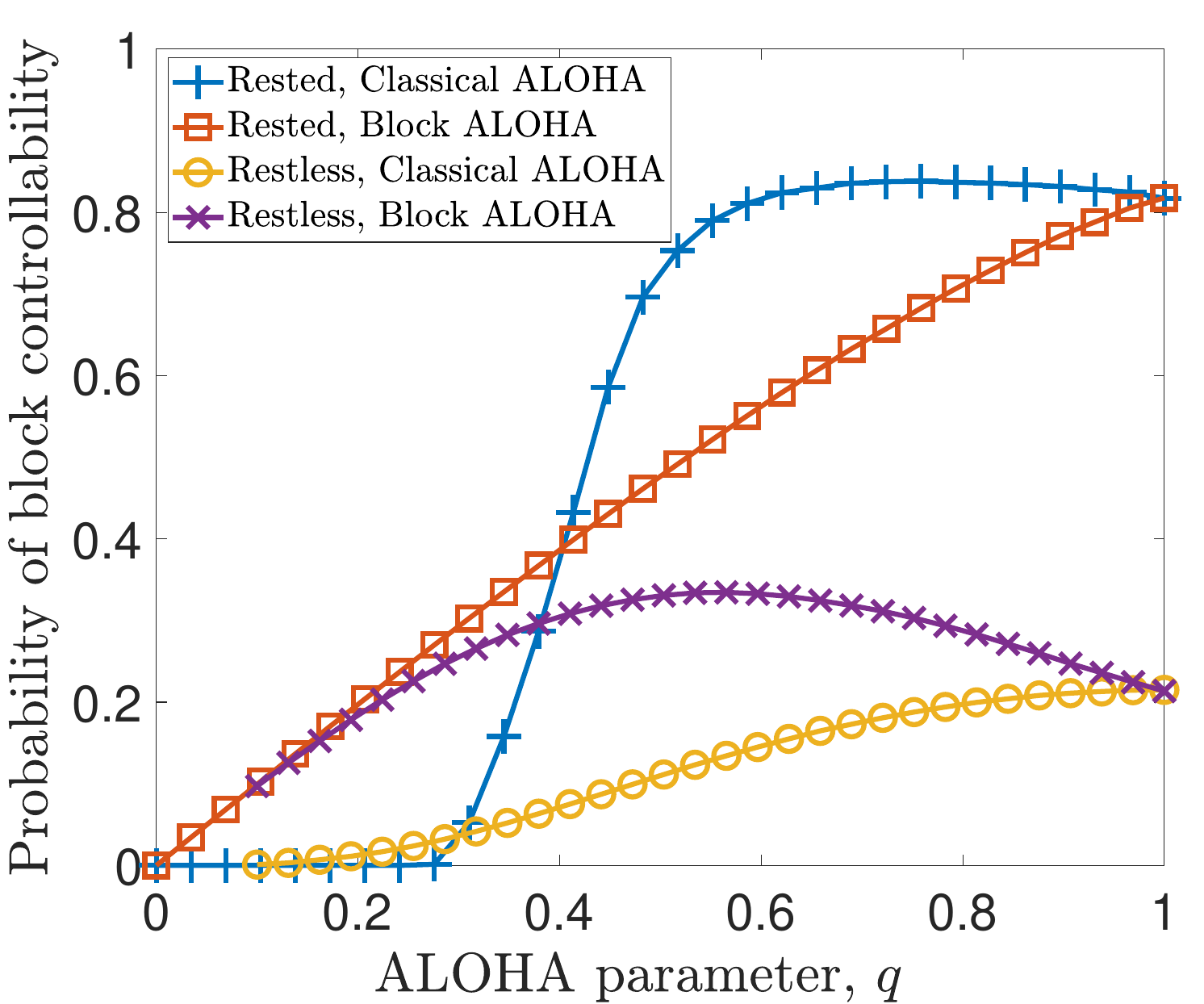}
    \caption{Probability of block controllability for the rested and restless systems with block or classical ALOHA channel access schemes. Here $T = 20$, $v = 4$, and $\lambda = 5e-3$ m$^{-2}$.}
    \label{fig:Classic_Block_Rested_Restless}
    \vspace{-0.3cm}
\end{figure}
We empirically study the effect of classical and block ALOHA parameters on the probability of block controllability for the rested and restless systems. \Cref{fig:Classic_Block_Rested_Restless} shows the probability of block controllability for the rested and restless systems for a single block ($K = 1$) averaged across 10000 realizations of \ac{PPP}. With classical ALOHA, both the rested and restless systems have a lower probability of controllability when the ALOHA parameter $q$ is low ($q \leq 0.3$),  as a lower channel access frequency limits controllability performance. As $q$ increases, controllability improves but declines at high $q$ due to interference, reducing successful transmission probability. The block ALOHA also has a similar trend. However, the increase in the probability of block controllability here is sharper as compared to classical ALOHA, as a single successful access guarantees channel access across all block slots. We note that although for the selected values of the system parameters, the controllability of the rested system with block ALOHA channel access shows a monotonic increase, for a denser deployment of control systems (higher $\lambda$), a high $q$ may lead to a decrease in the probability of block controllability. At $q = 1$, both protocols yield similar performance.

Comparing the two ALOHA protocols for the restless system, block ALOHA outperforms classical ALOHA across all $q$ values. This observation is intuitive, as block ALOHA enables consecutive channel access, which increases the likelihood of successive successful transmissions, thereby leading to a higher probability of achieving block controllability in the restless system. Conversely, for the rested system, the choice between block and classical ALOHA is non-trivial. For example, in \Cref{fig:Classic_Block_Rested_Restless}, we see that $q \approx 0.65$ under classical ALOHA achieves higher controllability than block ALOHA’s optimal $q=1$, an effect attributed to Jensen's inequality. Specifically, classical ALOHA  adds randomness through varying transmitting controllers for each slot, whereas block ALOHA fixes the set of transmitting controllers for the entire block. So, for different values of $q$, the expectation of the block controllability over the transmitting set may be greater or lower than the block controllability for an expected transmission set. Moreover, rested systems generally exhibit a higher probability of block controllability at any given $q$ value when using the same ALOHA protocol, as they are more flexible and incorporate an additional feedback loop.

\begin{figure}
    \centering
    \includegraphics[width = 0.5\linewidth]{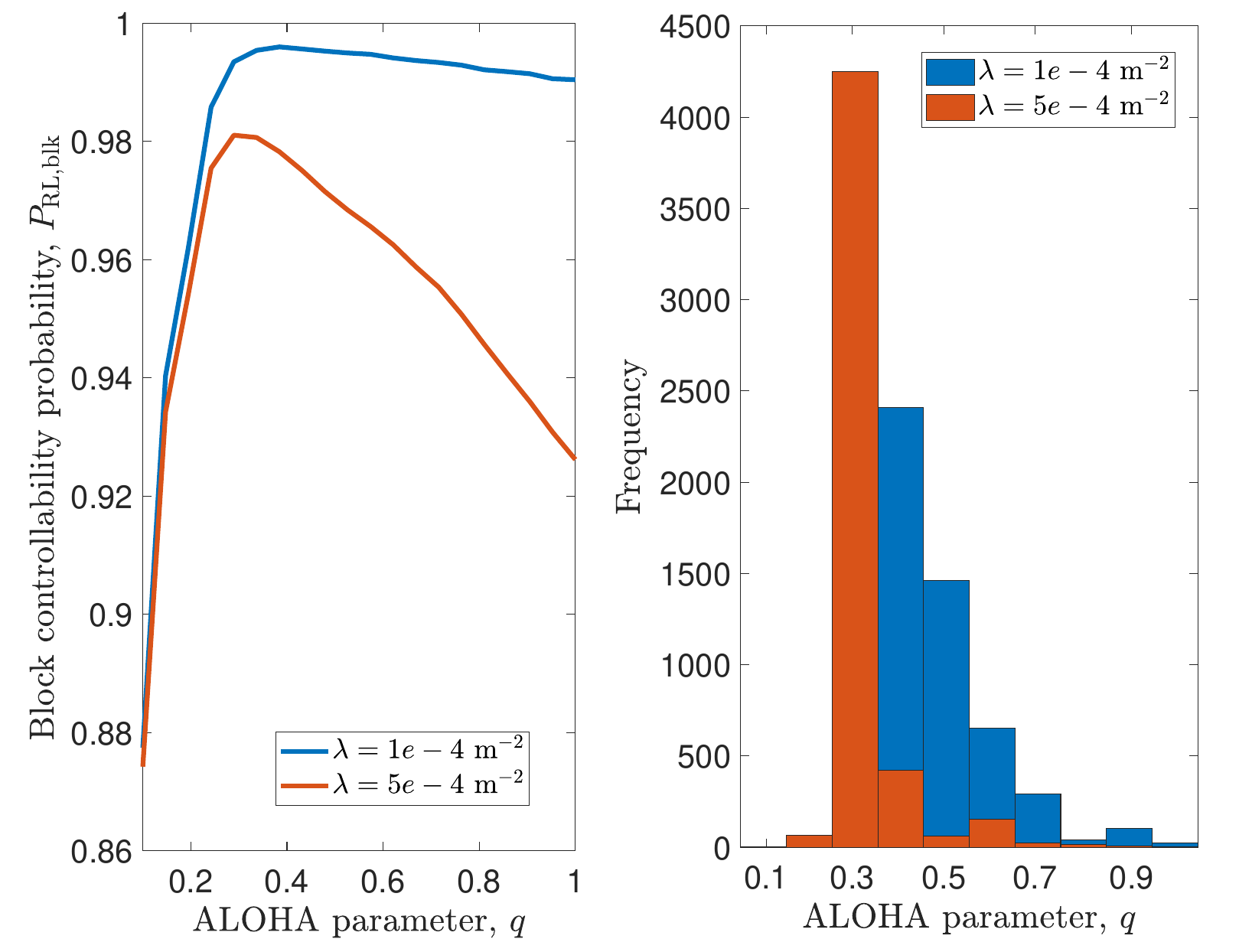}
    \caption{Comparison of block controllability probability of restless system with $v = 5$ in a block length of $T = 20$ and the relative selection of the ALOHA parameters by the block \ac{TS} framework in $K = 5000$ blocks.}
    \label{fig:TS1}
    \vspace{-0.2cm}
\end{figure}
We next study the impact of the block and classical ALOHA parameters on block controllability of the restless and rested systems for a large value of $K=5000$. Here, unlike the empirical study in \Cref{fig:Classic_Block_Rested_Restless}, we use the theoretical expressions derived in \Cref{theo:RL_blk,theo:RL_cls,theo:RD_blk,theo:RD_cls}. Since the analysis of block and classical ALOHA are similar, we focus on illustrating the impact of block ALOHA parameter on the block controllability in the restless system and the impact of classical ALOHA parameter on block controllability in the rested system. These results are summarized in \Cref{fig:TS1,fig:TS2}. 

\Cref{fig:TS1} shows the theoretical expression for block controllability probability $P_{\rm RL,blk}$ of restless system in \Cref{theo:RL_blk} for two different network densities with $\lambda = 10^{-4}$ m$^{-2}$ and $\lambda = 5\times10^{-4}$ m$^{-2}$. The trends are consistent with those in \Cref{fig:Classic_Block_Rested_Restless}, where low channel access probability $q$ results in low $P_{\rm RL,blk}$ despite low interference.  Further, $P_{\rm RL,blk}$ increases with $q$ until a threshold is reached, beyond which higher $q$ leads to increased interference, degrading success probability despite more frequent channel access. \Cref{fig:TS1} also shows that a higher density of the network, indicated by higher $\lambda$, results in lower $P_{\rm RL, blk}$ due to an increase in interference. Also, $P_{\rm RL, blk}$ decreases at a faster rate with an increase in $q$, as compared to a lower value of $\lambda$, as the intensity of interfering controllers is $q\lambda$. Furthermore, the optimal channel access probability depends on the intensity of the control systems; for example, a lower value of $\lambda= 10^{-4}$~m$^{-2}$ leads to a higher optimal channel access probability $\approx 0.35$ as compared to the optimal value of $\approx 0.25$ when $\lambda = 5\times10^{-4}$~m$^{-2}$. 

The right panel of \Cref{fig:TS1} shows the results from \ac{TS} with  $D=10$ and $\cl{P}=\{0.1, 0.2, \ldots, 1\}$. From the plot (on the left panel of \Cref{fig:TS1}), note that the optimal ALOHA parameters from $\cl{P}$ are $0.4$ and $0.3$ for $\lambda = 10^{-4}$ m$^{-2}$ and $5\times10^{-4}$ m$^{-2}$, respectively. We observe that the \ac{TS} algorithm efficiently learns this parameter relatively quickly. In particular, in $K=5000$ blocks, the system selects the optimum ALOHA parameter in more than 4000 blocks for $\lambda = 5\times10^{-4}$ m$^{-2}$ where the mean rewards for the different arms are significantly distinct. On the contrary, for $\lambda = 10^{-4}$ m$^{-2}$, several non-optimum ALOHA parameters have mean rewards close to the optimum value and hence, the optimum ALOHA parameter is chosen less frequently.

\begin{figure}
    \centering
    \includegraphics[width = 0.5\linewidth]{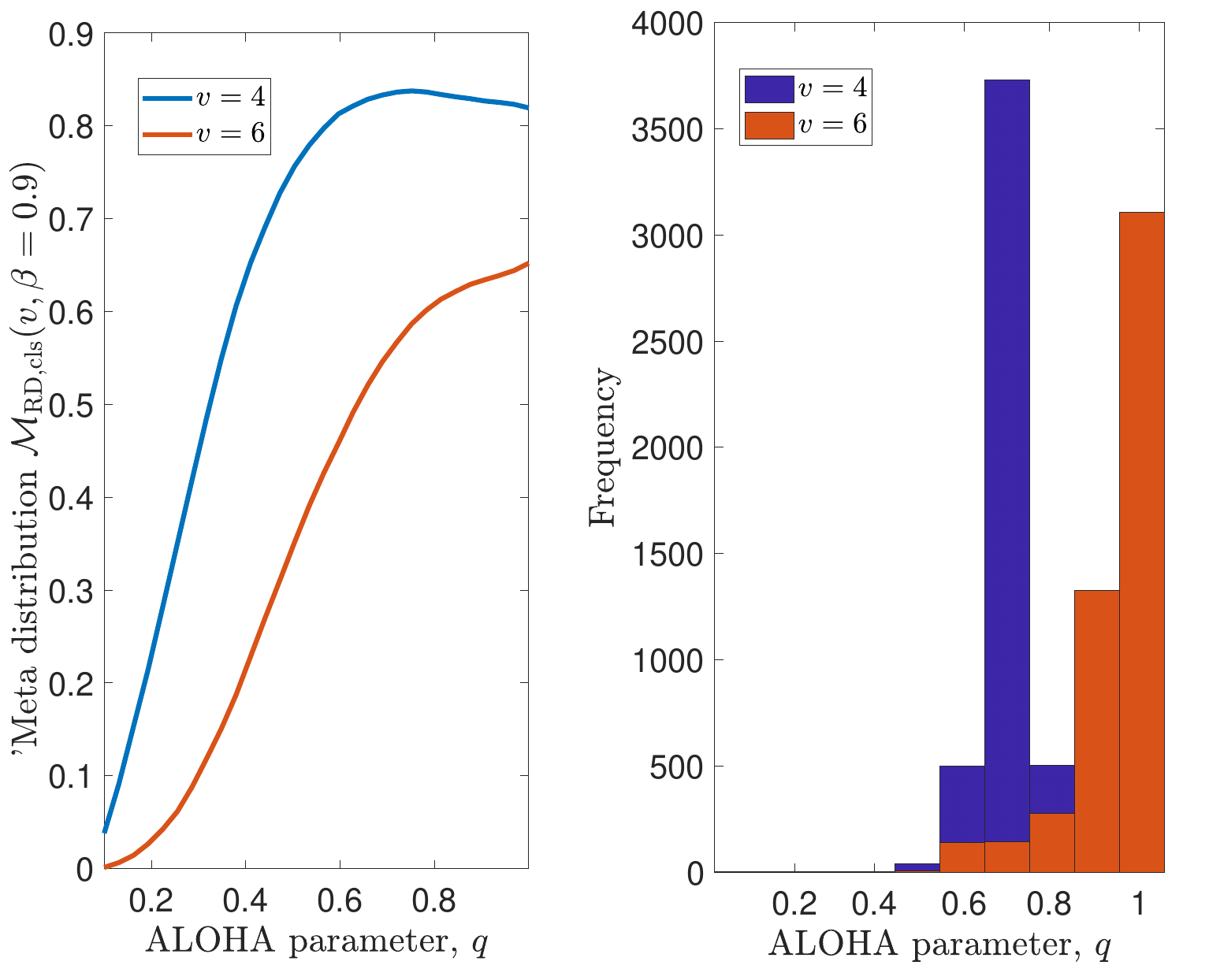}
    \caption{Comparison of the meta distribution $\cl{M}_{\rm RD,cls}(v, \beta=0.9)$ of rested system with classical ALOHA and the relative selection of the classical ALOHA parameter by the \ac{TS} framework in $K=5000$ slots.}
    \vspace{-0.5cm}
    \label{fig:TS2}
\end{figure}

\Cref{fig:TS2} shows the meta distribution $\cl{M}_{\rm RD,cls}(v, \beta)$  of rested system with classical ALOHA in \Cref{theo:RD_cls} for two different controllability indices $v=4$ and 6 with reliability threshold of $\beta=0.9$. The higher value of $v$ makes the system more demanding, resulting in a lower meta distribution. For $v=6$, the optimal ALOHA parameter is 1, prioritizing channel access over increased interference. The optimal parameter decreases as $q$ decreases, indicating that the rested system should reduce its channel access probability for lower $v$ to achieve the required successful transmissions while minimizing interference. 
The right panel of \Cref{fig:TS2} shows the performance of the classical \ac{TS} algorithm in determining the optimal classical ALOHA parameter with respect to different $v$. Here, we consider blocks of $20$ slots and observe the system over $250$ blocks. Similar to the restless case, we consider $\cl{P}=\{0.1,0.2,\ldots,1\}$. For $v = 4$, our analytical framework predicts that the optimal access probability is $q = 0.7$, while for $v = 6$, the optimum scheme is to transmit in all slots. This is reflected in the frequency with which the \ac{TS} framework selects the channel access probability. However, the performance of the rested system using classical \ac{TS} is inferior to that of block \ac{TS} for the restless system, indicating that block \ac{TS} consistently outperforms classical \ac{TS} for all systems and both protocols.
\begin{figure}
    \centering
    \includegraphics[width = 0.6\linewidth]{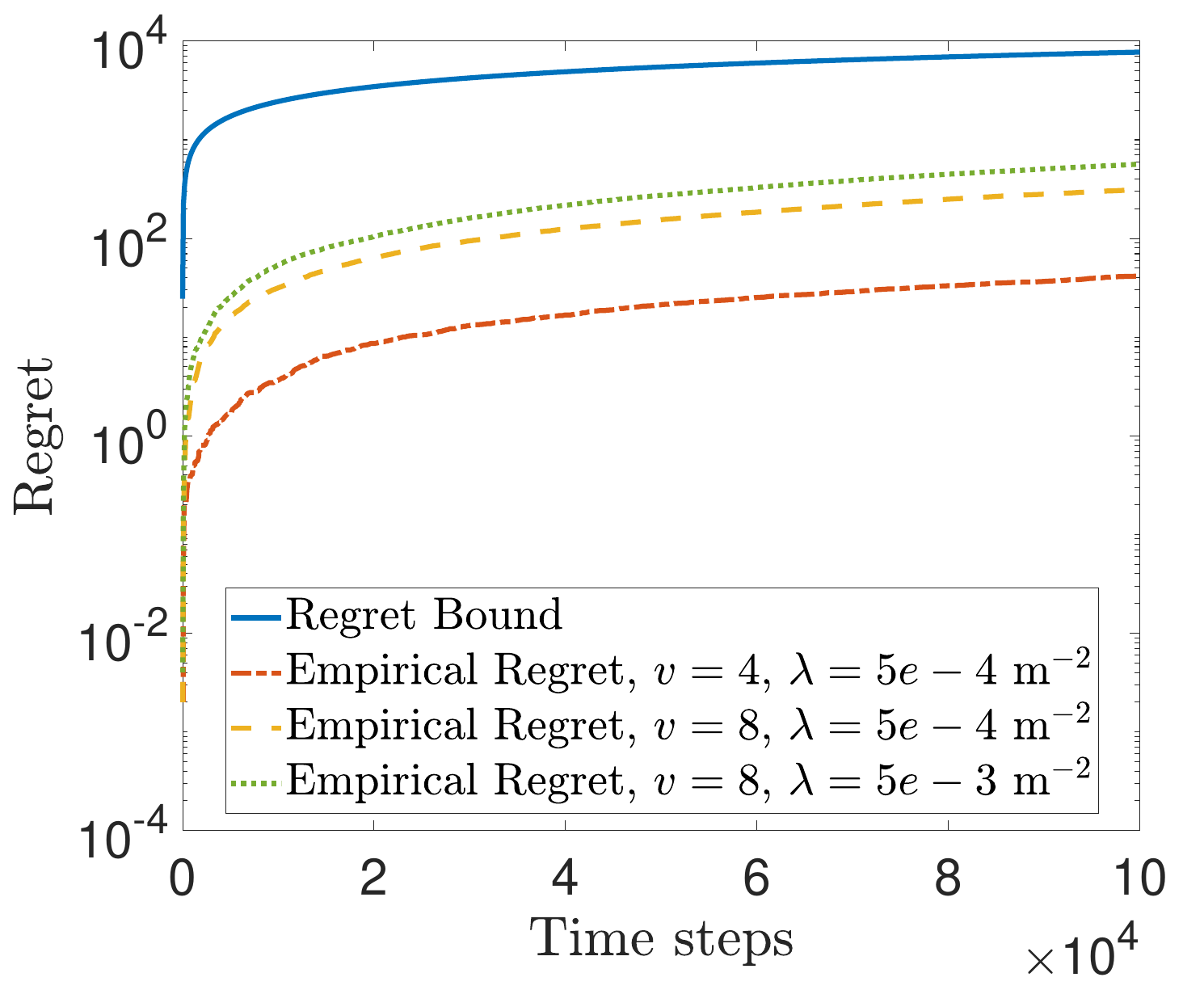}
    \caption{Regret for the restless system for different values of $v$ and $\lambda$.}
    \label{fig:TS_Regret_}
    \vspace{-0.2cm}
\end{figure}

Finally, we illustrate the evolution of regret of \ac{TS} for the restless system, defined in \Cref{fig:TS_Regret_}. Higher $\lambda$ and $v$ lead to a higher regret due to intensified interference and decreased controllability, respectively. Nonetheless, the regret evolves sub-linearly in both cases, aligning with the bound in \Cref{prop:regret}. 

\section{Conclusion}
We developed a control design and communication protocol for two types of Poisson networked control systems: restless and rested. Using stochastic geometry, we analyzed interference impacts on performance, showing that dense networks with simultaneous channel access degrade performance. To mitigate this, we introduced block and classical ALOHA protocols for channel access. Our study explored how ALOHA parameters affect controllability probability, identifying optimal parameters based on system characteristics, such as controller density. We also developed a \ac{TS} algorithm for sequentially optimizing these parameters, demonstrating sub-linear regret. Future work will focus on characterizing the regret of \ac{TS} across various network realizations and analyzing decentralized ALOHA parameter selection.

\appendices
\crefalias{section}{appendix}

\section{Proof of \Cref{prop:Pblk_defn}}
\label{app:_csp}
Given the distances $r_i$'s and using \eqref{eq:SINR_defn}, the success probability $P_{\rm blk}=\bb{P}(\xi(t)>\gamma \mid \Phi  )$ is given by
\begin{align*}
  P_{\rm blk} &  = \bb{P}\lb \!|h_0(t)|^2 \!> \!\frac{\gamma[N_0+\sum_{i \in \phi} \eta\rho|{h}_i(t)|^2r_i^{-\alpha}C_i(k)]}{ P \rho r_0^{-\alpha}} \!\rb  \\
    & \overset{(a)}{=} e^{-\frac{\gamma N_0}{\eta\rho r_0^{-\alpha}}}\bb{E}\ls \exp\lb \!\frac{{-\gamma}\sum_{i \in \phi}\rho r_i^{-\alpha}}{   r_0^{-\alpha}}|{h}_i(t)|^2C_i(k)\!\rb \rs  \\
  %  &  {=}  \cl{T}(r) \bb{E}\ls \prod \limits_{\bs{z}_i \in\phi} \exp\lb \frac{{-\gamma}K{h}_ir_i^{-\alpha}}{r^{-\alpha}}\rb\rs  \\
    &  \overset{(b)}{=}   e^{-\frac{\gamma N_0}{\eta\rho r_0^{-\alpha}}}  \prod_{i \in\phi} \bb{E}\ls \exp\lb \frac{{-\gamma}r_i^{-\alpha}}{r_0^{-\alpha}}|{h}_i(t)|^2C_i(k)\rb \rs,   
 \end{align*}
 where Step $(a)$ is because $|h_0(t)|^2$ is exponential distributed. Step $(b)$ follows as $h_i(t)$'s are independent across different values of $i$. Finally, the desired result follows from the Laplace transform of the exponentially distributed $|h_i(t)|^2$. $\hfill\blacksquare$
\section{Proof of \Cref{theo:RL_blk}}
\label{app:RL_blk}
    From the definition of $\bar{F}_{{\rm RL,blk}}(v) $ in \Cref{prop:blocksuccess}, its expected value of  is given by
\begin{multline*}
    \bb{E}\ls \bar{F}_{{\rm RL,blk}}(v) \rs = \sum_{l = 1}^{\lfloor \frac{T+1}{v+1} \rfloor} (-1)^{l + 1} \binom{T - lv}{l- 1} \\
     \times\bb{E}\ls P_{\rm blk}^{lv+1} (1 - P_{\rm blk})^{l-1} 
     +\frac{T - lv +1}{l} P_{\rm blk}^{lv} (1 - P_{\rm blk})^{l}\rs.
\end{multline*}
By the Binomial expansion $(1 - P_{\rm blk})^{l} = \sum_{\ell = 0}^{l} \binom{l}{\ell} (-1)^{\ell} P_{\rm blk}^{\ell}$, we arrive at
\begin{multline}
    \bb{E}\ls \bar{F}_{{\rm RL,blk}}(v) \rs = \sum_{l = 1}^{\lfloor \frac{T+1}{v+1} \rfloor} (-1)^{l + 1} \binom{T - lv}{l - 1}   \\
     \times \Bigg(\sum_{{\ell} = 0}^{l - 1} \binom{l-1}{{\ell}} (-1)^{\ell} \bb{E}\ls P_{\rm blk}^{lv + 1 +{\ell}}\rs \\ +  \frac{T - lv +1}{l} \sum_{{\ell} = 0}^l \binom{l}{{\ell}} (-1)^{\ell} \bb{E}\ls P_{\rm blk}^{lv +{\ell}}\rs\Bigg).\label{eq:expected_F} 
\end{multline}
We compute the moments of conditional success 
probability~as 
\begin{equation*}
    \bb{E}\ls P_{\rm blk}^l\rs = e^{-\frac{l\gamma N_0}{\eta\rho r_0^{-\alpha}}}\bb{E}\ls \prod\limits_{i \in\phi: C_i=1}   \lb \frac{{r}_0^   {-\alpha}}{{r}_0^{-\alpha}+{{\gamma}r_i^{-\alpha}}}\rb^l \rs.
\end{equation*}
Here, transmitting controllers follow a \ac{PPP} of intensity $\lambda q$, and using the probability generating functional of \ac{PPP}~\cite{haenggi2012stochastic},
\begin{align*}
    & \bb{E}\ls \lb \prod\limits_{i \in\phi: C_i=1}   \frac{{r}_0^   {-\alpha}}{{r}_0^{-\alpha}+{{\gamma}r_i^{-\alpha}}}\rb^l \rs \\
      & =\exp\lb - 2\pi\lambda q \int_0^\infty 1 - \lb \frac{{r}_0^{-\alpha}}{{r}_0^{-\alpha}+{{\gamma}z^{-\alpha}}}\rb^l \dif z\rb  \\
     &=\exp\lb - 2\pi\lambda  q\int_0^\infty 1 - \lb 1 - \frac{{\gamma}z^{-\alpha}}{r_0^{-\alpha} + {\gamma}z^{-\alpha}}\rb^l \dif z\rb.
\end{align*}
Further, from the binomial expansion 
\begin{equation}
    (1 - y)^{l} -1= \sum_{\ell = 1}^{l} \binom{l}{\ell} (-y)^{\ell},
\end{equation}
with $y=\frac{{\gamma}z^{-\alpha}}{r_0^{-\alpha} + {\gamma}z^{-\alpha}}$, we get $\zeta(l)=\bb{E}\ls P_{\rm blk}^l\rs$. The desired result is obtained by substituting this relation in \eqref{eq:expected_F}. $\hfill\blacksquare$

\section{Proof of~\Cref{theo:RD_blk}}
\label{app:RD_blk}
From the definition of $\cl{M}_{\rm RD,blk}(v, \beta)$, we have
\begin{align*}
    \cl{M}_{\rm RD,blk}(v, \beta) &=\bb{P}\lb q\sum_{l = v}^T \binom{T}{l} P_{\rm blk}^l(1 - P_{\rm blk})^{T - l} \geq \beta\rb \\
    &= \bb{P}\lb P_{\rm blk} \geq P_{\rm blk}(q)\rb,
\end{align*}
which is the \ac{CCDF} of random variable $P_{\rm blk}$. Now, we use  the Gil-Pelaez theorem~\cite{gil1951note}, we derive
\begin{equation*}
    \cl{M}_{\rm RD,blk}(v, \beta) = \frac{1}{2} + \frac{1}{\pi} \int_0^\infty \frac{\Im\lb e^{-js \log(\beta)} \bb{E}\ls P_{\rm blk}^{js}\rs\rb}{s} {\rm d}s.
\end{equation*} 
We complete the proof by evaluating the moments $\zeta_{\cl{I}}(s)=\bb{E}\ls P_{\rm blk}^{js}\rs$ using \Cref{prop:Pblk_defn} and the transmitting controllers form a PPP with density $q\lambda$ as
\begin{align*}
    \zeta_{\cl{I}}(s)&= e^{-\frac{js \gamma N_0}{\eta\rho r_0^{-\alpha}}}\bb{E}\ls \prod\limits_{i \in\phi}   \lb \frac{{r}_0^   {-\alpha}}{{r}_0^{-\alpha}+{{\gamma}r_i^{-\alpha}}}\rb^{js} \rs\\
    &= e^{-\frac{js \gamma N_0}{\eta\rho r_0^{-\alpha}}}\!\exp \ls \! - 2\pi q\lambda\!\!\int_{0}^{\infty}\!\!1-\!\lb \frac{{r}_0^   {-\alpha}}{{r}_0^{-\alpha}+{{\gamma}z^{-\alpha}}}\!\rb^{js} \!{\rm d}z\!\rs.
\end{align*}

\section{Proof of \Cref{prop:regret}}
\label{app:regret}
The proof is based on the analysis of Thompson sampling for finite-armed Bandits~\cite[Chapter 36]{lattimore2020bandit}. We define $\mu_*=q_*T\bar{S}(q_*)$ as the average reward in a block with the optimal channel access probability $q_*$, and $\mu(p_d)=p_dT\bar{S}(p_d)$ as the average reward obtained when the channel access probability is $p_d$, for $d=1,2,\ldots,D$. Then, from \eqref{eq:regret_defn}, the reward is
\begin{equation*}
    \cl{R}(K) = \bb{E}_{\Phi} \ls \sum_{k=0}^{K-1} \bb{E} \lb \mu_* - \mu(q(k))\rb\rs,
\end{equation*}
where $q(k)$ is the channel access probability in block $k$. Now, by the law of total expectation, for any random event $\cal{E}$, 
\begin{align}
    \cl{R}(K) &= \bb{E}_{\Phi} \ls \bb{P}(\cl{E})\bb{E}\ls\sum_{k=0}^{K-1} \bb{E} \lb \mu_* - \mu(q(k)) |\cl{E}\rb\rs\rs\notag\\
    &\hspace{0.5cm}+ \bb{E}_{\Phi} \ls \bb{P}(\cl{E}^{\complement})\bb{E}\ls\sum_{k=0}^{K-1} \bb{E} \lb \mu_* - \mu(q(k))|\cl{E}^{\complement}\rb\rs\rs\notag\\
    &\leq \cl{R}_1(K) + KT\bb{P}(\cl{E}^{\complement}),\label{eq:regret_bound}
\end{align}
where $\cl{R}_1(K)=\bb{E}_{\Phi} \ls\sum_{k=0}^{K-1} \bb{E} \lb \mu_* - \mu(q(k)) |\cl{E}\rb\rs$ and we use the property that $\mu_* - \mu(q(k))\leq \mu_* \leq T$.

We next define the event $\cl{E}$ using the empirical estimate of the reward computed by the \ac{TS} algorithm. Let the empirical estimate of the reward corresponding to the access probability $p_d$ after the $k$th block be
\begin{equation*}
    \hat{\mu}_d(k)= \frac{1}{\Gamma(k,d)}\sum_{\kappa=0}^{k-1}\mathbbm{1}(q(k)=p_d)p_d\sum_{t=\kappa T}^{(\kappa+1)T-1}S(t),
\end{equation*}
where $\Gamma(k,d)$ denotes the number of blocks until block $k$ in which channel access probability is chosen as $p_d$.  If $\Gamma(k,d)=0$, we define $\hat{\mu}_d(k)=0$.
Here, given the channel access probability sequence $q^{(K)}=\{q(1),q(2),\ldots,q(K)\}$, the Bernoulli random variable $S(t)$ has mean $P_{\rm blk}(p_d)$ defined in \Cref{prop:Pblk_defn} with $q=p_d$. Hence, $\bb{E}(\hat{\mu}_d(k)|q^{(K)})=\mu(p_d)$. We define $\cl{E}=\cap_{k=0}^{K-1}\cap_{d=1}^D\cl{E}(k,d)$, where the event $\cl{E}(k,d)$ is that for a given $k=0,1,\ldots,K-1$ and $d=1,2\ldots,D$,
\begin{equation*}
    |\hat{\mu}_d(k-1) - \mu(p_d)| < \epsilon(k,d) = \sqrt{\frac{2 \log (1/\delta)}{\max
    \{1, T\Gamma(k,d)\}}},
\end{equation*}
for some $0<\delta<1$. Now, to compute the bound in \eqref{eq:regret_bound}, we first compute $\bb{P}(\cl{E}^{\complement})$ using the union bound as follows:
\begin{equation*}
    \bb{P}(\cl{E}^{\complement}) \leq \sum_{k=0}^{K-1}\sum_{d=1}^{D}\bb{P}\lb\cl{E}(k,d)^{\complement}\rb
    \leq 2KD\delta,
\end{equation*}
where we also use Hoeffding's inequality. So, \eqref{eq:regret_bound} implies
\begin{equation}
    \cl{R}(K) \leq \cl{R}_1(K) + 4K^2TD\delta.\label{eq:regret_bound_new}
\end{equation}

Now, we bound the term $\cl{R}_1(K)$ in \eqref{eq:regret_bound_new} by defining the $\sigma-$algebra generated by the ALOHA parameters and the corresponding rewards by the end of the $k$th slot as $
    \cl{F}_k = \sigma\lb q^{(k)}, S(0),S(1),\ldots,S((k+1)T-1)\rb$.
Then, we have
\begin{equation*}
    \cl{R}_1(K) =  \bb{E}_{\Phi} \ls \sum_{k=0}^{K-1} \bb{E} \lb \mu_* - \mu(q(k)) | \cl{F}_{k-1},\cl{E}\rb\rs.
\end{equation*}
However, the \ac{TS} algorithm implies that the conditional distribution of $q_*$ is the same as $q(k)$~\cite{lattimore2020bandit}, i.e., $\bb{E} \ls \nu(\mu_*)| \cl{F}_{k-1}\rs=\bb{E} \ls \nu(\mu(q(k)))| \cl{F}_{k-1}\rs$, for any function $\nu$. Consequently,
\begin{align*}
\cl{R}_1(K)  \notag\\
&\hspace{-0.8cm}= \bb{E}_{\Phi} \ls \sum_{k=0}^{K-1} \mu_* - \nu(\mu_*) + \nu(\mu(q(k)))- \mu(q(k)) | \cl{F}_{k-1},\cl{E}\rs\notag\\
&\hspace{-0.8cm}= \bb{E}_{\Phi} \ls \sum_{k=0}^{K-1} \mu_* - \nu(\mu_*) + \nu(\mu(q(k)))- \mu(q(k)) | \cl{E}\rs,\label{eq:regret_bound1}
\end{align*}
using the law of total expectation. Furthermore, we choose
\begin{equation*}
\nu(\hat{\mu}_d(k-1)) =\min\lc 1,\max\lc 0,\hat{\mu}_d(k-1) + \epsilon(k,d)\rc\rc.
\end{equation*}
Therefore, under $\cl{E}$, we have $\mu_* < \nu(\mu_*)$, leading to
\begin{align*}
    \cl{R}_1(K)&\leq \bb{E}_{\Phi} \ls \sum_{k=0}^{K-1} \nu(\mu(q(k)))- \mu(q(k)) | \cl{E}\rs\\
    &\leq \bb{E}_{\Phi} \ls \sum_{k=0}^{K-1} \sum_{d=1}^D\mathbbm{1}(q(k)=p_d) 2\epsilon(k,d) \rs\\
    &= \bb{E}_{\Phi} \ls\sum_{d=1}^D \int_{0}^{\Gamma(K-1,d)} 2\sqrt{\frac{2 \log (1/\delta)}{z}}\dif z \rs\\
    &= \bb{E}_{\Phi} \ls\sum_{d=1}^D \sqrt{32\Gamma(K-1,d) \log (1/\delta)}\rs \\
    &= \sqrt{32KD \log (1/\delta)}.
\end{align*}
Hence, by choosing $\delta=1/\sqrt{K}$, we arrive at $\cl{R}_1(K)\leq \sqrt{64KD \log (K)}$.
Combining the above relation with \eqref{eq:regret_bound_new} gives
\begin{equation*}
    \cl{R}(K) \leq \sqrt{64KD \log (K)} + 4TD.
\end{equation*}
Hence, we arrive at the desired result.
\bibliography{references.bib}
\bibliographystyle{IEEEtran}
\vspace{-1cm}
\begin{IEEEbiography}{Gourab Ghatak} (Member, IEEE) received the Ph.D. degree from Telecom Paris Tech (University of Paris Saclay), France, during which he was with CEA Grenoble, France. Currently, he is an Assistant Professor with the Department of Electrical Engineering, Indian Institute of Technology Delhi
(IIT Delhi). His research interests include stochastic geometry, MACPHY cross-layer issues in beyond 5G systems, and machine learning for wireless communications.
\end{IEEEbiography}
\vspace{-1cm}
\begin{IEEEbiography}
    {Geethu Joseph} received the B. Tech. degree in electronics and communication engineering from the National Institute of Technology, Calicut, India, in 2011, and the M. E. degree in signal processing and the Ph.D. degree in electrical communication engineering (ECE) from the Indian Institute of Science (IISc), Bangalore, in 2014 and 2019, respectively. She was a postdoctoral fellow with the Department of Electrical Engineering and Computer Science at Syracuse University, NY, USA, from 2019 to 2021. She is currently a tenured assistant professor in the signal processing systems group at the Delft University of Technology, Delft, Netherlands. Dr. Joseph was awarded the 2022 IEEE SPS Best PhD dissertation award and the 2020 SPCOM Best Doctoral Dissertation award.  She is also a recipient of the Prof. I. S. N. Murthy Medal in 2014 for being the best M. E. (signal processing) student in the ECE dept., IISc, and the Seshagiri Kaikini Medal for the best Ph.D. thesis of the ECE dept. at IISc for the year 2019-'20. She is an associate editor of the IEEE Sensors Journal.  Her research interests include statistical signal processing, network control, and machine learning.
\end{IEEEbiography}
\vspace{-1cm}
\begin{IEEEbiography}
    {Chen Quan} received the B.S. degree in electronic engineering from the Nanjing University of Science and Technology, Nanjing, China, in 2016 and her M.S. degree in electronic engineering and PhD in electrical engineering from Syracuse University, Syracuse, NY, USA, in 2019 and 2023, respectively. She is currently a postdoctoral researcher in the signal processing systems group at the Delft University of Technology, Delft, Netherlands. Her research interests include tracking and optimization.
\end{IEEEbiography}

\end{document}